\newtheorem{theorem}{Theorem}[section]
\newtheorem{lemma}[theorem]{Lemma}
\newtheorem{definition}[theorem]{Definition}
\newtheorem{remark}[theorem]{Remark}
\newtheorem{obs}[theorem]{Observation}
\newtheorem{assumption}[theorem]{Assumption}
\newcommand{\TL}{\widetilde{O}}
\newcommand{\todo}[1]{[{\color{red}{#1}}]}
\newcommand{\CON}{\odot}
\newcommand{\pix}{pivot_i(x)}
\newcommand{\QU}{\textsc{Query}}
\newcommand{\pis}{pivot_i(s)}
\newcommand{\piis}{pivot_{i+1}(s)}
\newcommand{\pit}{pivot_i(t)}
\newcommand{\al}{\alpha}
\newcommand{\be}{\beta}
\newcommand{\ball}{ball}
\newcommand{\ti}[1]{2^{2^{#1}}}
\newcommand{\uf}{\textsc{UpdateFrom}}
\newcommand{\wtst}{\left\lceil \frac{|st|}{2}\right\rceil}
\newcommand{\lgn}{\log\log n}
\newcommand{\lln}{\log\log n-1}
\newcommand{\llnt}{\log\log n-2}
\newcommand{\llni}{\log \log n-i}
\newcommand{\llnk}{\log \log n-\log k}
\newcommand{\ui}{u_i}
\newcommand{\vi}{v_i}
\newcommand{\ai}{a_i}
\newcommand{\bi}{b_i}
\newcommand{\uii}{u_{i+1}}
\newcommand{\vii}{v_{i+1}}
\newcommand{\aii}{a_{i+1}}
\newcommand{\bii}{b_{i+1}}
\newcommand{\tri}{\textsc{Triangulate}}
\newcommand{\ra}{}
\newcommand{\es}{\textsc{EnsureCloseness}}
\title{Improved 2-Approximate Shortest Paths for close vertex pairs 
}
\date{}
\author{
  Manoj Gupta \\
  IIT Gandhinagar \\
  India\\
  \texttt{gmanoj@iitgn.ac.in} \\
  %% examples of more authors
}
\begin{document}
\setlength{\abovedisplayskip}{1pt}
\setlength{\belowdisplayskip}{5pt}
%\fontsize{13}{13}\selectfont

\maketitle
%-Title

%+Abstract
\begin{abstract}

An influential result by Dor, Halperin, and Zwick (FOCS 1996, SICOMP 2000) implies an algorithm that can compute approximate shortest paths for all vertex pairs in $\TL(n^{2+O\left(\frac{1}{k}\right )})$ time\footnote{$\TL()$ notation hides poly$\log n$ factors}, ensuring that the output distance is at most twice the actual shortest path, provided the pairs are at least $k$ apart, where $k \ge  2$. We present the first improvement on this result in over 25 years. Our algorithm achieves roughly same $\TL(n^{2+\frac{1}{k}})$ runtime but applies to vertex pairs merely $O(\log k)$ apart, where $\log k \ge 1$. When $k=\log n$,  the running time of our algorithm is $\TL(n^2)$ and it works for all pairs at least $O(\lgn)$ apart. Our algorithm  is combinatorial, randomized, and returns correct results for all pairs with a high probability.

\end{abstract}

\section{Introduction}

Let $G$ be an undirected, unweighted graph with $n$ vertices and $m$ edges.  Let $st$ denote the shortest path between $s$ and $t$ and  $|st|$ denote its length. Computing all-pairs shortest paths (APSP) is a fundamental problem in graph algorithm. For unweighted graphs, combinatorial methods achieve a time complexity of $O(mn)$, while fast matrix multiplication ($FMM$) solves APSP in $\TL(n^{\omega})$ time \cite{Seidel95}, where $\omega$ denotes the matrix multiplication exponent \cite{AlmanW24,Gall14,Stothers10,Williams12,DuanHR23,WilliamsXZR24}.

However, these algorithms have super-quadratic runtimes in dense graphs. To achieve near-quadratic performance, we must allow approximate solutions. Let $est(s,t)$ be the approximate shortest path length computed by an algorithm for each pair $(s,t)$. An algorithm is said to be an $(\alpha,\beta)$-approximate APSP, or to provide an $(\alpha,\beta)$-approximation of the length of a path, if for every pair $(s,t)$, the following holds:
\[
|st| \leq est(s,t) \leq \alpha |st| + \beta \hspace{2.5cm} \text{(where $\al \ge 1$ and $\be \ge 0$)}
\]

If $\beta = 0$, we omit it from the notation and call the algorithm $\alpha$-approximate APSP. When $\alpha = 1$, we refer to it as a $+\beta$-approximate APSP. The former provides a multiplicative approximation, while the latter gives an additive approximation. We first discuss prior results on additive approximation.

Aingworth, Chekuri, Indyk and Motwani \cite{AingworthCIM99} designed an $\TL(n^{2.5})$ algorithm for $+2$-approximate  APSP. Dor, Halperin and Zwick \cite{DorHZ00} improved the above time to $\TL(\min\{n^{3/2}m^{1/2},n^{7/3}\})$. Deng, Kirkpatrick, Rong, Williams and Zhong \cite{DengYRWZ22}  used $FMM$ to improve this running time to $\TL(n^{2.26})$(using an improved result in \cite{Durr23}).

  For a $+k$-approximate APSP, Dor, Halperin, and Zwick \cite{DorHZ00} designed a combinatorial algorithm with a running time of $\TL(\min\{n^{2-\frac{2}{k+2}}m^{\frac{2}{k+2}},n^{2+\frac{2}{3k-2}}\})$. Saha and Ye \cite{SahaY24} recently improved this bound using $FMM$. These results serve as fundamental building blocks for multiplicative approximation.

   In this paper, we primarily focus on  2-approximate APSP algorithm. A 2-approximate APSP algorithm can be constructed using a $+k$-approximate APSP algorithm based on the following observation:

\begin{obs}\label{obs:atom}  
A $+k$-approximate APSP algorithm provides $2$-approximate distances for all vertex pairs with a distance of at least $k$.  
\end{obs}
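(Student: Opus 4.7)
The plan is to unpack the two definitions involved and observe that one implies the other as soon as $|st|$ is large enough to absorb the additive slack. By the definition stated earlier in the excerpt, a $+k$-approximate APSP algorithm produces an estimate satisfying $|st| \le est(s,t) \le |st| + k$. The left-hand inequality already matches the lower bound of a $2$-approximation ($\alpha=2$, $\beta=0$), so the only thing to verify is the upper bound $est(s,t) \le 2|st|$.

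For this, I would invoke the hypothesis $|st| \ge k$ to replace the additive error $k$ by $|st|$ itself, yielding
\[
est(s,t) \;\le\; |st| + k \;\le\; |st| + |st| \;=\; 2|st|.
\]
Combined with the lower bound $|st| \le est(s,t)$, this is exactly the $(2,0)$-approximation guarantee for every pair with $|st| \ge k$, which is what the observation asserts.

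There is no real obstacle here; the argument is a one-line algebraic manipulation. The role of the observation in the paper is conceptual rather than technical: it identifies the regime ($|st| \ge k$) in which a $+k$-additive error automatically collapses to a factor-$2$ multiplicative error, and thereby motivates the main contribution of the paper, namely pushing the threshold at which $2$-approximation kicks in from $k$ down to $O(\log k)$, addressing precisely the close-pair regime where the additive-to-multiplicative reduction does not immediately apply.
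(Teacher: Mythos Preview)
Your proof is correct and is exactly the one-line argument the observation invites; the paper itself does not even spell out a proof, treating the statement as immediate from the definitions. There is nothing to add or change.
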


 Using this observation, the above algorithms produce 2-approximate answers for vertex pairs sufficiently far apart.
 Thus, +2-approximate algorithm of  \cite{AingworthCIM99} implies a 2-approximate APSP algorithm with a running time $\TL(n^{2.5})$. In particular, look at the running time of Dor, Halperin and Zwick \cite{DorHZ00}. If $k = O(\log n)$, their algorithm's running time is $\TL(n^2)$.  This result implies that we can compute a 2-approximate APSP for vertex pairs at least $O(\log n)$ apart.   One of the most important open questions in this field is {\em designing a 2-approximate APSP algorithm that runs in $\TL(n^2)$ time and handles all vertex pairs}.

 Unfortunately, thus far, only $(2, \beta)$-approximate APSP algorithms have been achieved in $\TL(n^2)$ time, where $\beta$ is a small constant. Indeed, several algorithms get $(2,1)$-approximate APSP in $\TL(n^2)$ time \cite{BermanK07, PatrascuR10, Sommer16, Knudsen17}.    For an exact 2-approximate APSP, Cohen and Zwick \cite{CohenZ01} proposed an $\TL(n^{3/2}m^{1/2})$  time algorithm, while Baswana and Kavitha \cite{BaswanaK06} achieved $\TL(m\sqrt{n} + n^2)$ time.  Both perform well on sparse graphs, but \cite{BaswanaK06} is more efficient than \cite{CohenZ01}. In fact, Baswana and Kavitha's algorithm serves as a building block for many future algorithms.  
  Until recently, these remained the best-known results for 2-approximate APSP. However, recent advancements have improved upon them.

Roditty \cite{Roditty23} designed a combinatorial algorithm for 2-approximate APSP in $\TL(n^{2.25})$ time, reviving interest in the problem. Dory, Forster, Kirkpatrick, Nazari, Williams, and Vos \cite{DoryFKNWV24}, along with Saha and Ye \cite{SahaY24}, leveraged $FMM$ to improve this bound to $\TL(n^{2.031})$. See \Cref{tab:comparison} for relevant results on 2-approximate APSP in undirected and unweighted graphs.

\begin{table}[h]
    \centering
    \begin{tabular}{p{2.5cm}|p{2.5cm}|p{5cm}|c}
        \toprule
        \small{Running Time} & \small{Applicable for pairs with distance at least } & \small{Remark} & \small{Reference} \\
        \midrule
        $\TL(n^{2.5})$ & 1 &  & \cite{AingworthCIM99} \\
        \midrule
        $\TL(n^{2+O(1/k)})$ & $k$ & $k \ge 2$ is an integer & \cite{DorHZ00,Roditty23,SahaY24} \\
        \midrule
        $\TL(n^{2})$ & $\log n$ & Consequence of above result & \cite{DorHZ00,Roditty23,SahaY24} \\
        \midrule
        $\TL(n^{3/2}m^{1/2})$ &1&Better for sparse graphs&\cite{CohenZ01}\\
        \midrule
        $\TL(m\sqrt n + n^2)$ & 1 & Better for sparse graphs & \cite{BaswanaK06} \\
        \midrule
        $\TL(n^{2.25})$ & 1 & & \cite{Roditty23} \\
        \midrule
        $\TL(n^{2.031})$ & 1 & Uses $FMM$ & \cite{DoryFKNWV24,SahaY24} \\
        \midrule
        \rowcolor{lightgray}
        $\TL(n^{2+1/k})$ & $O(\log k)$ & $\log k \ge 1$ is an integer & \Cref{thm:main} \\

        \midrule
        \rowcolor{lightgray}
        $\TL(n^{2})$ & $O(\log \log n)$ & Consequence of the above result &\\
    \end{tabular}
    \caption{Some important results for 2-approximate APSP }
    \label{tab:comparison}
\end{table}

But what about computing 2-approximate APSP for vertex pair at least $k$ apart? To this end, we use \Cref{obs:atom}  and apply the additive APSP result of \cite{DorHZ00,SahaY24}. The drawback of this approach is using a powerful result for a seemingly simpler problem. The algorithm in \cite{DorHZ00,SahaY24} actually computes a $+k$-approximate APSP, which incidentally implies a 2-approximate APSP for pairs at least $k$ apart. Intuitively, solving $+k$-approximate APSP appears more challenging --- or is it? Roditty \cite{Roditty23} explicitly raises this question. We quote it here (with minor notation changes):

\begin{quoting}[font=itshape, begintext={``}, endtext={''\cite{Roditty23}}]  
Question 1.3. Can we compute $2$-approximation faster than computing $+k$ approximation for vertex pairs at distance at least $k$?  
\end{quoting}

Roditty \cite{Roditty23} affirmatively answered this question by designing a $2$-approximate APSP algorithm with a running time of $\TL(\min\{n^{2-\frac{2}{k+4}}m^{\frac{2}{k+4}},n^{2+\frac{2}{3k-2}}\})$ for vertex pairs at least $k$ apart. Saha and Ye \cite{SahaY24} further improved this bound to $\TL(\min\{n^{2-\frac{2}{k+4}}m^{\frac{2}{k+4}},n^{2+\frac{1}{2(k-1)}}, n^{2+\frac{2}{3k-2}}\})$. These results refine the original work of \cite{DorHZ00} and establish a separation between $+k$-approximate and $2$-approximate APSP for such vertex pairs.  However, all three results \cite{DorHZ00,Roditty23,SahaY24} still have a maximum running time of $O(n^{2+O(1/k)})$. In this paper, we make significant progress on this problem, establishing the following result:

\begin{theorem} \label{thm:main}
	There is a randomized combinatorial algorithm that, with high probability, finds $2$-approximate shortest paths for all vertex pairs at least $O(\log k)$ apart and runs in $\TL(n^{2+\frac{1}{k}})$ time, where $\log k \ge  1$ is an integer.
\end{theorem}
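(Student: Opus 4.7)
My plan is to adapt the hierarchical pivot framework of Dor--Halperin--Zwick and Baswana--Kavitha, but to compress the $k$-level hierarchy used in the classical $+k$-approximate algorithm into a doubling hierarchy of only $O(\log k)$ levels. The reason such a compression is possible is that the theorem asks for a purely multiplicative $2$-approximation: for a pair $(s,t)$ of distance $\ell$ we can tolerate an additive slack of up to $\ell$, so at level $i$ we only need the ``pivot ball'' radius $r_i$ to satisfy $r_i\le \ell/2$ for pairs whose distance lies in the band $[r_i,2r_i]$. Doubling the radii across levels therefore covers all distances from $\Omega(\log k)$ up to $n$ with only $O(\log k)$ levels.

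Concretely, for $i=0,1,\dots,\lceil\log k\rceil$ I sample a pivot set $D_i$ by independent Bernoulli sampling at a rate chosen so that every vertex has a pivot in $D_i$ within distance $r_i=\Theta(2^i)$ with high probability. For every vertex $v$ I precompute $\PI(v)$ and $d(v,\PI(v))$ by a single multi-source BFS from $D_i$. In addition, from each pivot $p \in D_i$ I run a truncated BFS on a sparsified graph in which heavier-level pivots are contracted---the Dor--Halperin--Zwick sparsification trick---so that the per-level BFS cost is $\widetilde{O}(n^{2+1/k}/\log k)$; summed over the $O(\log k)$ levels this matches the claimed running time. The $n^{1/k}$ overhead itself comes from choosing the sparsification threshold at the bottom level to be $n^{1/k}$.

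The query for a pair $(s,t)$ with $|st|\ge c\log k$ then scans the levels and outputs $\min_i\bigl(d(s,\PI(s))+d(\PI(s),t)\bigr)$ restricted to levels where $d(s,\PI(s))\le |st|/2$; the triangle inequality immediately yields a $2$-approximation, and a valid level always exists because $r_0=O(\log k)$ with high probability and the $r_i$ double. The closeness threshold $\Omega(\log k)$ is therefore inherited directly from the high-probability guarantee needed at the lowest level.

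The main obstacle is making $d(\PI(s),t)$ available to the query without storing a full BFS tree from every pivot, which would be prohibitively expensive at the dense low levels. My plan is a recursive stitching argument: when $t$ falls outside the truncated BFS tree rooted at $\PI(s)$, replace $t$ by $pivot_{i+1}(t)$ and recurse one level up, paying an additive $r_{i+1}$ along the way. Because the radii form a geometric sequence, the total additive error telescopes to a constant factor of $|st|$, so the $2$-approximation survives. Formalizing that the recursion terminates after $O(\log k)$ steps, that it never inflates the estimate past $2|st|$, and that this bound holds with high probability across all $n^2$ pairs simultaneously is the most delicate part: it is essentially a generalization of the ball-growing lemmas of \cite{BaswanaK06} to a doubling hierarchy, done in a way that the extra levels do not multiply the additive error as they would under the classical $+k$ analysis.
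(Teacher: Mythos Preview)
Your plan has a genuine gap in the error accounting, and it is exactly the gap the paper's main idea is designed to close. In your recursive stitching, each time you ``go up one level'' you pay an additive $r_{i+1}=\Theta(2^{i+1})$; because the $r_i$ form a geometric sequence, the total additive slack after recursing from level $i_0$ up to the top is $\Theta(r_{\text{top}})=\Theta(k)$, not $O(\log k)$. Your sentence ``the total additive error telescopes to a constant factor of $|st|$'' is only true if $|st|=\Omega(k)$, which is precisely the old Dor--Halperin--Zwick regime you are trying to beat. For a pair at distance $\Theta(\log k)$ the telescoped error swamps the true distance and you do not get a $2$-approximation. (There is also a smaller inconsistency: you cannot sample at a fixed Bernoulli rate and guarantee a pivot within a \emph{fixed hop distance} $r_i$ in an arbitrary unweighted graph; pivot distances are graph-dependent, and what sampling controls is ball \emph{size}, not ball radius.)

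The paper's approach is structurally different from a doubling-radius hierarchy. It uses a doubly-exponential sampling hierarchy $A_0\supseteq A_1\supseteq\cdots$ with $|A_i|\approx n/2^{2^i}$ (so only $\log\log n$ levels), and the central new ingredient is the \textsc{EnsureCloseness} procedure (\Cref{lem:near}). This procedure runs in $\TL(n^2)$ time and guarantees that either $est(s,t)\le 2|st|$ already, or the level-$i$ pivots of $s$ and $t$ are within distance $O(1)$ of the natural ``anchor'' vertices $a_i,b_i$ on the shortest $st$ path. That $O(1)$ closeness is what lets the inductive lemma (\Cref{lem:mainclaim}) accrue only a \emph{constant} additive error per level, so after $\log\log n-i$ levels the total slack is $18(\log\log n-i)$ rather than anything geometric. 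Stopping the induction at level $i=\log\log n-\log k-1$ and doing one final pass over all pairs (at cost $\TL(n^2\cdot 2^{2^{i+1}})=\TL(n^{2+1/k})$) then yields the $O(\log k)$ threshold. Without an analogue of \textsc{EnsureCloseness} that converts the per-level error from geometric to additive-constant, your framework cannot reach the stated bound.
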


Thus, we can compute a $2$-approximation significantly faster than obtaining a $+k$-approximation for vertex pairs at least $k$ apart. Specifically, when $k = \log n$, the running time of our algorithm is $\TL(n^2)$ and it applies to all pairs at least $O(\lgn)$ apart. In fact, most of our paper focuses on this special case -- our starting point for this problem. Once we resolve this case, we extend our algorithm to any $k$. Our result is randomized and relies on a simple combinatorial technique that, to our knowledge, has not been exploited for this problem. We hope this technique will prove useful in related problems within this field.

\subsection{Other related results}
In the literature, extensive work has been done on the $(\alpha, \beta)$-APSP problem. We now highlight some major threads in this area.

\begin{enumerate}
	\item  $(\al,\be)$-approximate APSP in weighted graphs
	
	The algorithms of \cite{CohenZ01} and \cite{BaswanaK06} (described in the previous section) extend to weighted graphs. Kavitha \cite{Kavitha12} designed a $(2+\epsilon)$-approximate APSP for weighted graphs in $\TL(n^{2.25})$ time. \cite{CohenZ01, BaswanaK06} designed a $3$-approximate APSP in $\tilde{O}(n^2)$ time. Additionally, \cite{BaswanaK06} designed a $(2, W_{s,t})$-approximate APSP in $\tilde{O}(n^2)$ time, where $W_{s,t}$ is the maximum edge weight in the shortest path between $s$ and $t$. %That is, for any vertex pair $(s, t)$, the algorithm returns an estimate $est(s, t)$ satisfying $|st| \leq est(s, t) \leq 2|st| + W_{s,t}$.

	\item Approximate Distance Oracle

	Given a graph $G$, we can preprocess it to construct a data structure that supports approximate distance queries between any two vertices. To this end, we design a query algorithm that returns an estimate upon request. We extend the definition of $(\alpha, \beta)$-approximate APSP to an $(\alpha, \beta)$-approximate distance oracle, where distances are computed only when queried rather than for all vertex pairs. Consequently, the preprocessing time can be subquadratic when $m < n^2$.

	In their seminal paper, Thorup and Zwick \cite{ThorupZ01} designed a $(2k-1)$-approximate distance oracle with a preprocessing time of $O(kmn^{1/k})$, oracle size $\TL(kn^{1+1/k})$, and query time $O(k)$. This result was later improved in \cite{BaswanaK06, Wulff12, Wulff13, Chechik14, Chechik15}. Akav and Roditty \cite{AkovR20} developed a $(2+\epsilon, 5)$-approximate distance oracle with a preprocessing time of $\TL(m+n^{2-\Omega(\epsilon)})$, where $\epsilon \in (0,1/2)$ is a constant. More recently, Chechik and Zhang \cite{ChechikZ22} designed a $(2,3)$-approximate distance oracle of size $\TL(n^{5/3})$ with a preprocessing time of $\TL(m+n^{1.987})$. They also constructed a $(2+\epsilon, c(\epsilon))$-approximate distance oracle of size $\tilde{O}(n^{5/3})$ in $\tilde{O}(m + n^{5/3+\epsilon})$ time, where $c(\epsilon)$ depends exponentially on $1/\epsilon$.

\end{enumerate}

\section{Preliminaries}

Unless stated otherwise, $st$ represents the shortest path between $s$ and $t$ in $G$, with ties broken arbitrarily. Let $|st|$ denote the length of $st$ path.
Although the input graph is unweighted, our algorithm constructs multiple graphs which may have weighted as well as unweighted edges. If the edge from $s$ to $t$ is an unweighted edge in our constructed graph, we will denote it as $(s,t)$, else we will denote it as $[s,t]$. Abusing notation, the weight of the edge $[s,t]$ will be denoted as $|[s,t]|$.  Let $\deg(s)$ be the degree of $s$ in $G$. For an edge $e=(s,t)$ in $G$, define the degree of  $e$, or  $\deg(e) := \min\{\deg(s),\deg(t)\}$. For two path $P$ and $Q$ ending and starting at same vertex respectively, $P \odot Q$ denotes the concatenation of $P$ with $Q$. In our paper, the term {\em "with a high probability"} means a probability $\ge 1-\frac{1}{n^c}$ where $c >0$ is a constant. Throughout the paper, for clear presentation, we will assume that $\lgn$ is an integer.

 \begin{definition} (Estimate of a pair)
 	For each pair $(s,t)$ in the graph, we maintain $est(s,t)$ which contains the estimate of the length of $st$ path. Several algorithms in the literature compute a $(2,1)$-approximate shortest path for all $(s,t)$ pairs in $\TL(n^2)$ time~\cite{BermanK07,PatrascuR10,Sommer16,Knudsen17}. Henceforth, for each $(s,t)$ pair in the graph, we will assume that: $est(s,t) \le 2|st|+1$. Additionally, whenever our algorithm updates $est(s,t)$, it also updates $est(t,s)$, ensuring both always store the same value.

.
 \end{definition}

\begin{definition} (Nested Vertex Sampling)
Let $A_{0}$ be the set of all  vertices in the graph. Let $A_1$ be the set of vertices sampled with the probability $\frac{1}{\ti{1}} $ from the graph. For each $ 2\le i \le \lln$, $A_i$ is the set of vertices sampled with a probability $\frac{1}{\ti{i-1}}$ from $A_{i-1}$. Our sampling ensures that $A_{\lln} \subseteq A_{\lgn-2} \subseteq \dots \subseteq A_2 \subseteq A_1 \subseteq A_0$.
\end{definition}

The following lemma follows directly from the definition and is proved in the appendix.

\begin{lemma}\label{lem:prob}
	$Pr[v \in A_i] = \frac{1}{\ti{i}}$ for $i \in [1 \dots \lln]$. Using standard probabilistic arguments, the size of $A_i$ is $\TL\left(\frac{n}{\ti{i}}\right)$ with a high probability.
\end{lemma}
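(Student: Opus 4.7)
The plan is to compute $Pr[v \in A_i]$ by unrolling the nested sampling and then lift the resulting expectation identity into a high-probability size bound via a standard Chernoff argument. The key enabling observation is the squaring identity $\ti{j}^2 = 2^{2 \cdot 2^j} = \ti{j+1}$, which is what causes the telescoping product of per-level retention probabilities to collapse neatly.

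First I would decompose $Pr[v \in A_i]$ via successive conditioning as $Pr[v \in A_1] \cdot \prod_{j=2}^{i} Pr[v \in A_j \mid v \in A_{j-1}]$. By the definition of the nested sampling, the first factor is $1/\ti{1}$ and the $j$-th conditional probability is $1/\ti{j-1}$, so the total product equals $1/(\ti{1}^2 \cdot \ti{2} \cdot \ti{3} \cdots \ti{i-1})$. I would then telescope using $\ti{1}^2 = \ti{2}$, then $\ti{2}^2 = \ti{3}$, and so on, to obtain $Pr[v \in A_i] = 1/\ti{i}$. A clean alternative would be a one-line induction on $i$ with inductive hypothesis $Pr[v \in A_{i-1}] = 1/\ti{i-1}$, multiplying through by the fresh conditional factor $1/\ti{i-1}$.

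For the size bound, I would observe that different vertices make their level-by-level inclusion decisions independently, so $|A_i| = \sum_{v \in V} X_v$ is a sum of $n$ independent Bernoulli indicators each with mean $1/\ti{i}$, yielding $\mathbb{E}[|A_i|] = n/\ti{i}$. A standard multiplicative Chernoff bound then gives $|A_i| = \Theta(n/\ti{i})$ with failure probability at most $\exp(-\Omega(n/\ti{i}))$.

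The only point requiring a little care, and essentially the only obstacle, is ensuring this failure probability is genuinely polynomially small in $n$; this needs $n/\ti{i} = \Omega(\log n)$. Since $i \le \lln$, one has $2^i \le (\log n)/2$, hence $\ti{i} \le 2^{(\log n)/2} = \sqrt{n}$, and therefore $n/\ti{i} \ge \sqrt{n}$, which comfortably dominates $\log n$ for large $n$. Plugging this into the Chernoff tail converts it into $1/n^c$ for any desired constant $c$, establishing the high-probability bound $|A_i| = \TL(n/\ti{i})$ claimed in the lemma.
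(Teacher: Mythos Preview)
Your proposal is correct and essentially matches the paper's own proof: the paper uses exactly the one-line induction you describe as your ``clean alternative,'' multiplying $Pr[v\in A_{i-1}]=1/\ti{i-1}$ by the fresh conditional factor $1/\ti{i-1}$ and invoking $\ti{i-1}^2=\ti{i}$. For the size bound the paper simply defers to ``standard probabilistic arguments,'' so your explicit Chernoff step together with the check that $n/\ti{i}\ge\sqrt{n}$ (since $i\le\lln$ forces $\ti{i}\le\sqrt{n}$) actually supplies more detail than the paper does.
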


\begin{definition} (Pivots and Balls)

	For $0 \le i \le \lln$, let $i$-th pivot of $s$ or $\pis$ be the vertex of $A_i$ nearest to $s$, with ties broken arbitrarily. Formally,
	$$\pis = \arg\min_{x \in A_i}\{ |sx| \}$$
	Define $\ball_i(s)$ to be all the vertices that are closer to $s$ than $\pis$. Formally, $$\ball_i(s) = \{v \in G  \text{ such that } |sv| < |s~\pis| \}$$
\end{definition}

By definition, $ball_i(s)$ cannot contain any vertex from $A_i$. The following lemma is immediate using the above definitions and we prove it in the appendix.

\begin{lemma} \label{lem:ballsize} (Size of ball)
With high probability, for all $0 \leq i \leq \lln$ and $s \in G$, the size of $ball_{i}(s)$ is at most $\TL(\ti{i})$.  

\end{lemma}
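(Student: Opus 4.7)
The plan is a standard nearest-pivot union-bound argument applied to the nested sample $A_i$. I would first fix $s \in G$ and $0 \le i \le \lln$, and order the vertices of $G$ by their distance from $s$, breaking ties the same way as in the definition of $pivot_i(s)$, so that $v_1, v_2, \dots, v_n$ is an enumeration with $|sv_1|\le|sv_2|\le\dots\le|sv_n|$. The key observation is that if $|ball_i(s)| > L$, then by definition of $ball_i(s)$ none of $v_1,\dots,v_L$ can lie in $A_i$: any such vertex would be a candidate for $pivot_i(s)$ at distance strictly less than $|s~pivot_i(s)|$, contradicting the minimality of $pivot_i(s)$.

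Next, I would apply \Cref{lem:prob}, which gives $\Pr[v \in A_i] = 1/\ti{i}$. Since each vertex is placed into $A_i$ independently (the nested sampling is just independent Bernoulli selections with the stated marginal probability), the probability that none of the $L$ nearest vertices to $s$ lies in $A_i$ is at most
\[
\left(1-\frac{1}{\ti{i}}\right)^L \;\le\; \exp\!\left(-\frac{L}{\ti{i}}\right).
\]
Setting $L = c\,\ti{i}\,\ln n$ for a sufficiently large constant $c$ makes this at most $n^{-(c+2)}$.

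Finally, I would take a union bound over all $s \in G$ (at most $n$ choices) and over all $i \in [0,\lln]$ (at most $\log\log n$ choices). The total failure probability is bounded by $n \cdot \log\log n \cdot n^{-(c+2)} \le n^{-c}$, which is the required high-probability guarantee. Since $L = O(\ti{i} \log n) = \TL(\ti{i})$, this shows that with high probability $|ball_i(s)| \le \TL(\ti{i})$ simultaneously for every $s$ and every $i$.

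The only delicate point—hardly an obstacle—is that the sampling is nested, i.e., $A_i$ is obtained from $A_{i-1}$ rather than from $V$ directly. This is precisely what \Cref{lem:prob} is set up to handle: the marginal probability that any fixed vertex lands in $A_i$ is $1/\ti{i}$, and the inclusion events remain mutually independent across vertices because each level's sub-sampling is done independently per vertex. Once this is noted, the rest of the argument is the textbook "first hit from a Bernoulli sample" bound, and no further calculation is needed.
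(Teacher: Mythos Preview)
Your argument is correct and matches the paper's own proof essentially line for line: order vertices by distance from $s$, observe that a large ball forces the nearest $\TL(\ti{i})$ vertices to all miss $A_i$, bound that event via $(1-1/\ti{i})^{L}\le e^{-L/\ti{i}}$, and union-bound over $s$ and $i$. Your explicit remark about independence of the nested sampling is a nice addition that the paper leaves implicit.
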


 We can use \Cref{lem:ballsize} to show that vertices  strictly inside the ball have bounded degree. $\ball_{i}(s)$ contains all vertices at a distance less than $|s \pis|$ from $s$.  Consider any vertex $v$ not on the boundary, i.e., $|sv| \leq |s~\pis| - 2$.  We claim that its degree is at most $\TL(\ti{i})$.  Otherwise, each of its neighbors would belong to $\ball_{i}(s)$, contradicting \Cref{lem:ballsize}.  Thus, we establish the following lemma.  

\begin{lemma}\label{lem:degreeinball}

With high probability, each vertex $v \in \ball_{i}(s)$ satisfying $|sv| \leq |s~\pis| - 2$ has degree at most $\TL(\ti{i})$.  

\end{lemma}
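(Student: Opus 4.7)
The plan is to argue by triangle inequality that every neighbor of such a $v$ is itself inside $\ball_{i}(s)$, and then invoke \Cref{lem:ballsize} to bound the degree.

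First, I would fix $i$ with $0 \le i \le \lln$, a vertex $s \in G$, and a vertex $v \in \ball_{i}(s)$ satisfying the hypothesis $|sv| \le |s~\pis| - 2$. Let $w$ be an arbitrary neighbor of $v$ in $G$. Since $G$ is unweighted, the triangle inequality gives
\[
|sw| \le |sv| + |vw| \le (|s~\pis| - 2) + 1 = |s~\pis| - 1 < |s~\pis|.
\]
By the definition of $\ball_{i}(s)$, this inequality means $w \in \ball_{i}(s)$. Hence every neighbor of $v$ lies in $\ball_{i}(s)$, so $\deg(v) \le |\ball_{i}(s)|$.

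Next, I would invoke \Cref{lem:ballsize}, which asserts that with high probability $|\ball_{i}(s)| \le \TL(\ti{i})$ simultaneously for all $i$ and all $s$. Combining this with the previous display yields $\deg(v) \le \TL(\ti{i})$ with high probability, as required. Because \Cref{lem:ballsize} already carries the high-probability guarantee uniformly over all $(i,s)$, no additional union bound over $v$ is needed: the conclusion holds for every vertex $v$ in every ball in the same high-probability event.

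There is essentially no obstacle here; the argument is a one-line consequence of the triangle inequality together with \Cref{lem:ballsize}. The only subtlety worth flagging in the write-up is the strict-versus-weak inequality in the definition of $\ball_{i}(s)$: the hypothesis $|sv| \le |s~\pis| - 2$ is exactly what is needed to push $|sw|$ strictly below $|s~\pis|$ (rather than just to $|s~\pis|$, which would fail for neighbors of boundary vertices). This is precisely why the lemma excludes the outer shell $|sv| = |s~\pis| - 1$.
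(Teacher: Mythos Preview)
Your proof is correct and follows essentially the same approach as the paper: show via the triangle inequality that every neighbor of such a $v$ lies in $\ball_i(s)$, then invoke \Cref{lem:ballsize} to bound the degree. Your write-up is in fact more explicit than the paper's one-sentence sketch, and your remark about the strict-versus-weak inequality is a helpful clarification.
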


Lemma~\ref{lem:degreeinball} implies that the shortest path from $s$ to $\pis$ consists of edges with degree at most  $\TL(\ti{i})$, except possibly the last edge.  We  use this to compute $\pis$ and $\ball_i(s)$ efficiently for each  $s \in G$ and each $i$.  The following lemma is proved in the appendix.

\begin{lemma}\label{lem:pivotandball}
With high probability, for each $i \in [0, \lln]$ and $s \in G$, we can compute $\pis$ and $\ball_i(s)$ in $\TL(n^2)$ time.
\end{lemma}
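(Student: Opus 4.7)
The plan is to handle pivots first and balls second, both within $\TL(n^2)$ total time.

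First, for each level $i \in [1, \lln]$, I would add a virtual super-source connected by unit-weight edges to every vertex of $A_i$ and run a single BFS from it in $G$. This yields, for every $s \in G$, both $\pis$ (the first $A_i$-vertex reached) and the distance $|s~\pis|$ in $O(m+n)$ time per level. Since $m \le n^2$, summing over the $\lln$ levels costs $\TL(n^2)$. The case $i = 0$ is trivial since $A_0 = V$, so $\pis = s$ and $\ball_0(s) = \emptyset$.

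Next, for each $s$ and each $i \ge 1$, I would run a layered BFS from $s$ that uses the value $|s~\pis|$ computed in the previous stage as its stopping depth. Specifically, generate the waves at distance $0, 1, \ldots, |s~\pis| - 1$; by definition their union is exactly $\ball_i(s)$, and we stop without descending further. The key accounting point is that the only vertices whose adjacency lists we actually scan are those in waves $0$ through $|s~\pis| - 2$: vertices of the last wave are merely listed as ball members, never expanded.

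This is precisely where Lemma~\ref{lem:degreeinball} pays off. Every expanded vertex sits in $\ball_i(s)$ at distance at most $|s~\pis| - 2$ from $s$, so its degree is $\TL(\ti{i})$. Combined with $|\ball_i(s)| \le \TL(\ti{i})$ from Lemma~\ref{lem:ballsize}, the cost of the BFS for a single pair $(s,i)$ is $\TL(\ti{i}^2)$. Summing over all $s$ yields $\TL(n \cdot \ti{i}^2)$ per level; using $\ti{i}^2 = \ti{i+1}$ and $\ti{\lgn} = n$, the levels telescope so that $\sum_{i=1}^{\lln} \ti{i}^2 = O(n)$. Hence the total cost of all ball computations is also $\TL(n^2)$.

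The main conceptual hurdle is that vertices on the boundary of the ball (at distance exactly $|s~\pis| - 1$) are not controlled by Lemma~\ref{lem:degreeinball}, and could in principle have degree $\Theta(n)$; a naive BFS that expanded them would blow the budget by a factor of $n/\ti{i}$, which is fatal for small $i$. Using the precomputed distance $|s~\pis|$ to halt the BFS exactly one wave before these vertices would need to be expanded is what lets the degree bound bite and keeps the analysis clean.
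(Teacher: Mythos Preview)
Your proposal is correct. The ball computation is essentially identical to the paper's: both run a BFS from $s$ that expands only vertices at distance $\le |s~\pis|-2$, then appeal to \Cref{lem:ballsize} and \Cref{lem:degreeinball} to bound the cost by $\TL(\ti{i+1})$ per pair $(s,i)$, which telescopes to $\TL(n^2)$ in total.

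The pivot computation, however, takes a genuinely different route. The paper runs a separate BFS from each $w \in A_i$ inside a restricted graph $H_w$ (containing $w$'s incident edges together with all edges of degree $\TL(\ti{i})$), relying on \Cref{lem:degreeinball} to argue that the true $s$--$\pis$ shortest path survives in $H_{\pis}$; each such BFS costs $\TL(n\,\ti{i})$, and summing over $|A_i| = \TL(n/\ti{i})$ vertices gives $\TL(n^2)$ per level. Your super-source BFS in the full graph $G$ is more elementary: it computes the correct pivot distances by definition, needs no appeal to the degree lemma, and costs only $O(m+n)$ per level. The paper's approach has the minor advantage of being $\TL(n^2)$ independently of $m$ (so it would not degrade on a hypothetical multigraph), whereas yours leans on $m \le n^2$; in the present setting that distinction is immaterial, and your argument is cleaner.
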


Henceforth, we will assume  that for each $s\in G$ and for each $i$, $est(s,\pis) = |s~\pis|$. Also, for each vertex $w \in \ball_i(s)$, $est(s,w) = |sw|$. All the above definitions and lemmas are well-known in the literature. Hence, we have relegated their proofs to the appendix. Before introducing new notations, we apply a key result.  Baswana and Kavitha~\cite{BaswanaK06} proposed an $\TL(m\sqrt{n})$ algorithm for  2-approximate APSP. We use it to approximate shortest paths where every edge in the path has degree at most $\TL(\sqrt{n})$.  To achieve this, we construct a subgraph $H$ containing all edges with degree at most $\TL(\sqrt{n})$ and run the algorithm from~\cite{BaswanaK06} on it.  Since $H$ has size $\TL(n\sqrt{n})$, the algorithm runs in $\TL(n^2)$ time.  Thus, we only need to handle $st$ paths that contain a vertex with degree $\ge \TL(\sqrt{n})$.  Hence, we assume the following for the rest of the paper.

\begin{assumption} \label{eq:1}
The shortest $st$ path contains at least one vertex with degree $\geq \TL(\sqrt{n})$. 
\end{assumption}  

\subsection{New Notations}

\begin{definition}

For $0\le i \le \lln$, let $\ai$ be the vertex closest to $s$ on the $st$ path such that its distance to $pivot_i(\ai)$ is at most $1$. We set $\ui := pivot_i(\ai)$.  

Similarly, let $\bi$ be the vertex closest to $t$ on the $st$ path such that its distance to $pivot_i(\bi)$ is at most $1$, and set $\vi := pivot_i(\bi)$.  
See \Cref{fig:1} for an example.

\end{definition}

\begin{figure}[hpt!]
\begin{center}
\begin{tikzpicture}
    % Define coordinates for s, t, and the path
    \node[draw, circle, fill=black, label=below:$s$, minimum size=4pt, inner sep=0pt] (s) at (0,0) {};
    \node[draw, circle, fill=black, label=below:$t$, minimum size=4pt, inner sep=0pt] (t) at (12,0) {};

    % Define the st path
    \draw[thick] (s) -- ++(2,0) coordinate (p1)
                  -- ++(2,0) coordinate (p2)
                  -- ++(2,0) coordinate (p3)
                  -- (t);

    % Define vertices ui and vi in bucket(i)
    \node[draw, circle, fill=blue, label=above:$\ui$, minimum size=4pt, inner sep=0pt] (ui) at (2,1) {};
    \node[draw, circle, fill=blue, label=above:$\uii$, minimum size=4pt, inner sep=0pt] (uii) at (4,1) {};
    \node[draw, circle, fill=red, label=below:$\aii$, minimum size=4pt, inner sep=0pt] (aii) at (3.5,0) {};
    \node[draw, circle, fill=red, label=below:$\ai$, minimum size=4pt, inner sep=0pt] (ai) at (2,0) {};
    \node[draw, circle, fill=blue, label=right:$\vi$, minimum size=4pt, inner sep=0pt] (vi) at (9,-1) {};
    \node[draw, circle, fill=red, label=above:$\bi$, minimum size=4pt, inner sep=0pt] (bi) at (9,0) {};

    \node[draw, circle, fill=blue, label=right:$\vii$, minimum size=4pt, inner sep=0pt] (vii) at (7,-1) {};
    \node[draw, circle, fill=red, label=above:$\bii$, minimum size=4pt, inner sep=0pt] (bii) at (8,0) {};

    % Connect ui and vi to st path
    \draw[thick] (ui) -- (ai);
    \draw[thick] (vi) -- (bi);
    \draw[thick] (vii) -- (bii);
    \draw[thick] (aii) -- (uii);
\end{tikzpicture}
\end{center}
\caption{A pictorial view of $\ui$, $\ai$, $\vi$, $\bi$. }
\label{fig:1}
\end{figure}
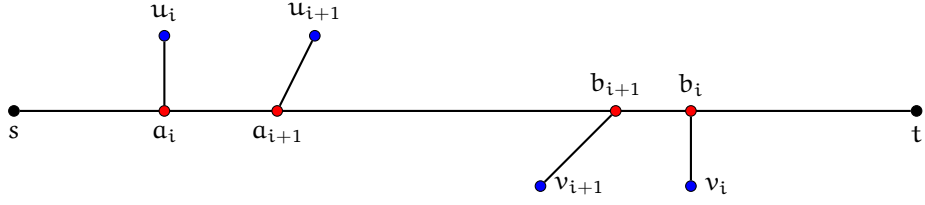

By definition, $a_0 = u_0=s$ and $b_0=v_0=t$. Note that  $\ui$ is set to $pivot_i(\ai)$ which by definition is either $\ai$ or adjacent to $\ai$.  In general, the existence of $\ai$, $\ui$, $\bi$ and  $\vi$ is not always guaranteed. To address this, we rely on \Cref{eq:1}. Suppose the $st$ path contains a vertex $x$ with degree at least $\TL(\sqrt{n})$. With high probability, some vertex in $A_{\lln}$ is adjacent to $x$, ensuring the existence of $u_{\lln}$.  Since vertex sampling is nested, this also guarantees the presence of $u_{\lgn-2}, \dots, u_1, u_0$ along $st$. The same holds for $\vi$'s.   Nested vertex sampling directly implies that $\ai$ is closer to $s$ than $\aii$. Similarly, $\bi$ is closer to $t$ than $\bii$.  Thus, we claim the following lemma which directly follows from the fact that $A_{i+1} \subseteq A_{i}$.

\begin{lemma}
For all $0 \le i \le \lgn-2$, $|s\ai| \le |s\aii|$ and $|t\bi| \le |t\bii|$.
\end{lemma}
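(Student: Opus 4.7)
The plan is to exploit the nesting $A_{i+1} \subseteq A_i$ that is built into the sampling scheme and then appeal to the minimality built into the definition of $\ai$ and $\bi$. The intuition is that a higher-level pivot (further $i$) is rarer, so being within distance $1$ of the $(i{+}1)$-st pivot is a strictly stronger requirement than being within distance $1$ of the $i$-th pivot. Hence any candidate for $\aii$ is automatically a candidate for $\ai$, and since $\ai$ is the closest such candidate to $s$, it cannot sit further from $s$ than $\aii$.

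More precisely, the first step is to observe that because $A_{i+1} \subseteq A_i$, every element of $A_{i+1}$ is also an element of $A_i$. Thus for any vertex $x \in G$, the vertex $pivot_{i+1}(x)$ lies in $A_i$ and is therefore a competitor in the minimization defining $pivot_i(x)$. This gives the pointwise bound
\[
|x ~ pivot_i(x)| \;\le\; |x ~ pivot_{i+1}(x)|.
\]

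Next, apply this with $x = \aii$. By definition of $\aii$, we have $|\aii ~ pivot_{i+1}(\aii)| \le 1$, and combining with the inequality above yields $|\aii ~ pivot_i(\aii)| \le 1$. In other words, $\aii$ itself satisfies the defining condition of a level-$i$ candidate. Since $\ai$ is, by definition, the vertex on the $st$ path \emph{closest to $s$} satisfying that condition, we conclude $|s\ai| \le |s\aii|$.

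The argument for $|t\bi| \le |t\bii|$ is completely symmetric, with $s$ replaced by $t$ and $\ai, \aii$ replaced by $\bi, \bii$. There is no real obstacle here; the entire content of the lemma is the one-line observation that nested sampling makes the level-$i$ pivot at least as near as the level-$(i{+}1)$ pivot, so the set of valid witnesses shrinks as $i$ grows, which forces the closest witness to shift away from $s$ (respectively $t$). No probabilistic or quantitative estimates are needed beyond the set inclusion $A_{i+1} \subseteq A_i$ guaranteed by the nested sampling definition.
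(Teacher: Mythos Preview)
Your proof is correct and takes essentially the same approach as the paper, which simply states that the lemma ``directly follows from the fact that $A_{i+1} \subseteq A_{i}$.'' You have spelled out explicitly what the paper leaves implicit: that $pivot_{i+1}(\aii)\in A_{i+1}\subseteq A_i$ forces $|\aii~pivot_i(\aii)|\le |\aii~pivot_{i+1}(\aii)|\le 1$, making $\aii$ a valid level-$i$ witness and hence no closer to $s$ than $\ai$.
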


\begin{comment}

\begin{proof}
Since $A_{i+1} \subseteq A_i$, every vertex in $A_{i+1}$ is also in $A_i$. By definition, $\ui$ is the closest vertex in $A_i$ to $s$, and $\uii \in A_{i+1} \subseteq A_i$, implying $|s\ui| \le |s\uii|$. The same argument applies to $\vi$ and $\vii$ with respect to $t$, proving $|t\vi| \le |t\vii|$.
\end{proof}

\end{comment}
Also, note that there cannot be any vertex with degree $\TL(\ti{i})$ on the $s\ai$ path, except maybe $\ai$. If such a vertex $x$ exists, then with a high probability there is a vertex of $A_i$ adjacent to $x$, contradicting our choice of $\ui$ and $\ai$. Thus, we have:

\begin{lemma} \label{lem:onpath}
	With high probability, for all $0 \le i \le \lln$, each edge in the  $s\ai$ path has degree $\le \TL(\ti{i})$. Similarly, each edge in $t\bi$ has degree $\le \TL(\ti{i})$.
\end{lemma}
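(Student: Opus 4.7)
The plan is to exploit the \emph{minimality} in the definition of $\ai$: because $\ai$ is the vertex on the $st$ path closest to $s$ whose $i$-th pivot lies within distance $1$, every vertex $x$ that lies on the $st$ path strictly between $s$ and $\ai$ must satisfy $|x~pivot_i(x)| \ge 2$. Equivalently, such an $x$ has $x \notin A_i$ and no neighbor in $A_i$. I will combine this structural fact with a standard high-probability degree bound coming from nested vertex sampling to conclude that each such $x$ has degree at most $\TL(\ti{i})$, which in turn controls the degree of every edge on the $s\ai$ path.

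The sampling argument is routine but I would state it cleanly as the first step. By \Cref{lem:prob}, each vertex is independently in $A_i$ with probability $1/\ti{i}$. Fix a constant $c$ large enough for the desired polynomial high-probability bound. If a vertex $x$ had $\deg(x) \ge c \ti{i} \ln n$, then
\[
\Pr[\text{no neighbor of } x \text{ lies in } A_i] \;\le\; \left(1 - \frac{1}{\ti{i}}\right)^{c\,\ti{i}\ln n} \;\le\; n^{-c}.
\]
Taking a union bound over all vertices $x \in G$ and all $i \in [0,\lln]$, with high probability every vertex that has no neighbor in $A_i$ satisfies $\deg(x) \le c\,\ti{i}\ln n = \TL(\ti{i})$. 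Call this the \emph{degree bound event}; condition on it for the rest of the argument.

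Now write the $s\ai$ subpath of the $st$ path as $s = p_0, p_1, \dots, p_k = \ai$. For any $j < k$, $p_j$ is strictly closer to $s$ than $\ai$ along the path, so by the minimality in the definition of $\ai$ we have $|p_j~pivot_i(p_j)| \ge 2$; hence $p_j \notin A_i$ and $p_j$ has no neighbor in $A_i$. By the degree bound event, $\deg(p_j) \le \TL(\ti{i})$ for every $j < k$. For any edge $e=(p_j,p_{j+1})$ on the $s\ai$ path (with $j < k$), the endpoint $p_j$ satisfies this degree bound, and since $\deg(e)=\min\{\deg(p_j),\deg(p_{j+1})\}$ we get $\deg(e) \le \TL(\ti{i})$. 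This handles every edge on the $s\ai$ path, including the last edge incident to $\ai$ itself (whose degree may well be large, but the \emph{other} endpoint $p_{k-1}$ is controlled). The argument for the $t\bi$ path is identical by symmetry.

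The only subtle point, and arguably the main obstacle to stating the proof cleanly, is precisely this last observation: one must notice that $\ai$ may have unbounded degree, so the degree bound does not come from both endpoints of every edge, but rather from the endpoint that lies strictly before $\ai$ on the $st$ path. Once this is spelled out, combining minimality of $\ai$ with the standard neighbor-sampling bound yields the lemma, and a union bound over $i \in [0,\lln]$ and over all source/target pairs keeps everything high-probability.
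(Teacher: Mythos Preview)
Your proof is correct and follows the same approach as the paper: both argue that, by minimality of $\ai$, every vertex strictly preceding $\ai$ on the $st$ path has no neighbor in $A_i$, and then invoke the standard sampling argument to bound such vertices' degrees by $\TL(\ti{i})$. The paper states this in one sentence before the lemma, whereas you spell out the union bound and the observation that the edge degree is controlled by the endpoint preceding $\ai$; these are exactly the details the paper elides.
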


\section{Overview}
\label{sec:overview}
In majority of the paper, we aim to achieve an $+O(\lgn)$-approximate APSP for some specific vertex  pairs.  In the overview, we present an algorithm that achieves an $+O(\lgn)$ approximation of the length of $st$ path.  Unfortunately, the algorithm does not work in general --- it relies on a specific assumption (in addition to \Cref{eq:1}), which we describe next.

\begin{assumption} \label{ass}
	Consider an $st$ path. For each $i$ (where $0 \le i \le \lgn-2$), $\ball_{i+1}(\ui)$ contains $\uii$ and $\ball_{i+1}(\vi)$ contains $\vii$.
\end{assumption}

By definition, $\uii \in A_{i+1}$, and $\ball_{i+1}(\ui)$ cannot contain a vertex from $A_{i+1}$. Thus, this assumption is clearly incorrect.  In fact, $\ui$ may be far from $\uii$ along the $st$ path, and $\ball_{i+1}(\ui)$ may not even contain a vertex close to $\uii$. However, let us examine what this assumption leads to.  Using this assumption and \Cref{eq:1}, we prove the following lemma:

\begin{lemma}\label{lem:overviewmain}
For each $0 \le i\le \lln$, $est(\ui,\vi) \le |\ai\bi| + 4(\llni)$
\end{lemma}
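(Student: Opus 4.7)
The plan is to prove the lemma by downward induction on $i$, from $i = \lln$ down to $i = 0$. The inductive hypothesis at step $i+1$ reads
\[
est(\uii, \vii) \le |\aii\bii| + 4(\log\log n - i - 1),
\]
and the inductive step must absorb exactly one extra additive $+4$ of slack while descending from level-$(i+1)$ pivots back to level-$i$ pivots along $st$. The crucial reason \Cref{ass} is doing all the work here is that it forces $\uii \in \ball_{i+1}(\ui)$ and $\vii \in \ball_{i+1}(\vi)$, so the convention stated just after \Cref{lem:pivotandball} makes $est(\ui,\uii) = |\ui\uii|$ and $est(\vi,\vii) = |\vi\vii|$ exactly.

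For the inductive step, I would first bound $|\ui\uii|$ purely geometrically. Since $|\ui\ai|\le 1$, $|\uii\aii|\le 1$, and $\ai,\aii$ both lie on the $st$ path with $\ai$ no farther from $s$, the triangle inequality gives
\[
|\ui\uii| \le |\ui\ai| + |\ai\aii| + |\aii\uii| \le |\ai\aii| + 2,
\]
and symmetrically $|\vi\vii|\le |\bi\bii|+2$. I would then invoke the natural relaxation $est(x,y)\le est(x,z)+est(z,y)$ twice, chaining through $\uii$ and then $\vii$, to deduce
\[
est(\ui,\vi) \le est(\ui,\uii) + est(\uii,\vii) + est(\vii,\vi).
\]
Combining with the inductive hypothesis yields
\[
est(\ui,\vi) \le (|\ai\aii|+2) + \bigl(|\aii\bii|+4(\log\log n - i - 1)\bigr) + (|\bii\bi|+2),
\]
and because $\ai,\aii,\bii,\bi$ appear in this order on the $st$ path, $|\ai\aii|+|\aii\bii|+|\bii\bi|=|\ai\bi|$. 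The right-hand side then collapses to $|\ai\bi|+4(\llni)$, as required.

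For the base case $i=\lln$, one needs $est(u_{\lln}, v_{\lln}) \le |a_{\lln}b_{\lln}|+4$. Here I would appeal to an auxiliary preprocessing that stores exact pairwise distances among $A_{\lln}$-vertices; since $|A_{\lln}|=\TL(\sqrt n)$ by \Cref{lem:prob}, a dedicated $\TL(n^2)$-time subroutine should suffice. Given this, $est(u_{\lln},v_{\lln}) = |u_{\lln}v_{\lln}| \le |a_{\lln}b_{\lln}|+2$ by the same triangle argument as above, and $4(\log\log n - \lln) = 4 \ge 2$ leaves room to spare.

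The induction itself is essentially mechanical once \Cref{ass} turns the two short hops $(\ui,\uii)$ and $(\vi,\vii)$ into exact distances, so the \emph{main obstacle} is algorithmic rather than combinatorial: one must exhibit an update/query procedure that genuinely realises the two-hop triangle chain through $\uii$ and $\vii$ for every relevant pair, so that the stored $est$-values actually support the inline inequalities used above, and one must produce the base case for all top-level pivot pairs inside the $\TL(n^2)$ budget. The computation inside this lemma is just the glue tying those two algorithmic guarantees into the stated additive bound.
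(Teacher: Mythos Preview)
Your inductive step is correct but takes a different route from the paper. You chain \emph{symmetrically} through both $\uii$ and $\vii$, using \Cref{ass} on both sides to turn $est(\ui,\uii)$ and $est(\vi,\vii)$ into exact distances; the paper instead first builds an auxiliary weighted graph $H_{\uii}$ (containing all degree-$\TL(\ti{i+1})$ edges, all pivot edges, and a weighted star out of $\uii$) and runs a shortest-path computation from $\uii$ there to obtain $est(\uii,\vi)\le |\aii\bi|+4(\llni)-2$ directly, after which a \emph{single} ball-hop via $\uii\in\ball_{i+1}(\ui)$ (\Cref{alg:idea}) finishes the induction. Your two-hop chain would require the ball-triangulation of \Cref{alg:idea} to be run twice in sequence (first to push $est(\vi,\uii)$ down via $\vii\in\ball_{i+1}(\vi)$, then again for $est(\ui,\vi)$ via $\uii$), which is fine under \Cref{ass}; the paper's asymmetric $H$-construction is chosen because it is exactly the piece that survives once \Cref{ass} is dropped and replaced by \Cref{lem:near} in the real algorithm.

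Your base case, however, has a real gap. Storing \emph{exact} pairwise distances among the $\TL(\sqrt n)$ vertices of $A_{\lln}$ is not achievable in $\TL(n^2)$ time combinatorially: a BFS from each source costs $O(m)$, so the total is $\TL(\sqrt n\cdot m)=\TL(n^{2.5})$ on dense graphs. The paper does not attempt exact distances here; its base case (\Cref{sec:base}) instead samples hitting sets $B_\ell$ for each degree scale $2^\ell$, runs BFS only inside the sparse subgraph of edges of degree $\le\TL(2^{\ell+1})$, and triangulates through the sampled neighbour of the maximum-degree vertex on $st$, yielding $est(u_{\lln},v_{\lln})\le |a_{\lln}b_{\lln}|+4$ within the $\TL(n^2)$ budget. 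That weaker $+4$ is exactly the base the induction needs, so you can simply cite it rather than invent a stronger subroutine that does not exist.
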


The  above lemma implies an additive $O(\log \log n)$-approximation of the $st$ path. Setting $i=0$ in the lemma gives $est(u_0,v_0) \leq |a_0b_0| + 4 \log\log n$. Since $a_0 = u_0 = s$ and $b_0 = v_0 = t$, we obtain $est(s, t) \leq |st| + 4\lgn$. This yields an $+O(\lgn)$ approximate APSP and a 2-approximation for the $st$ path when $|st| \geq 4\lgn$.

We now prove the above lemma using induction on $i$, where $i$ ranges from $\lln$ to 0. For the base case $i=\lln$, consider vertices $u_{\lln}$ and $v_{\lln}$. In \Cref{sec:base}, we present a simple algorithm to approximate their shortest path without requiring \Cref{ass}. Thus, we have:

\[
est(u_{\lln}, v_{\lln}) \leq |a_{\lln} b_{\lln}| + 4
\]

This completes our base case. Assume, by induction, that we have approximated the distance between $\uii$ and $\vii$ as follows:

\begin{equation}\label{eq:2}
est(\uii, \vii) \leq |\aii \bii| + 4(\lgn - (i+1))
\end{equation}

We will now estimate the distance between $\ui$ and $\vi$. This can be done in two steps:

\begin{enumerate}[noitemsep]
	\item First, estimate the distance between $\uii$ and $\vi$
	\item Then, use the above result to improve the estimate between $\ui$ and $\vi$
\end{enumerate}
We now go over these two steps in detail.

\subsection{Estimate the distance between $\uii$ and $\vi$}

Consider the following simple algorithm to approximate the distance between $\uii$ and $\vi$.
We construct a graph $H_{\uii}$ containing:
\begin{enumerate}[noitemsep]
    \item Edges from $\uii$ to every vertex $x \in G$, weighted by $est(\uii, x)$.
    \item All edges with degree $\TL(\ti{i+1})$.
    \item For each vertex $x \in G$ and for each $i \in [0 \dots \lln]$, an edge  $[x,pivot_i(x)]$ with weight $|x~pivot_i(x)|$.\\
\end{enumerate}

The reader can verify that the size of $H_{\uii}$ is $\TL(n\ti{i+1})$. We compute the shortest paths from $\uii$ to all other vertices in $H_{\uii}$ in $\TL(n \ti{i+1})$ time. Repeating this for all vertices in $A_{i+1}$ gives a total runtime of $\TL(n^2)$. We claim that a path from $\uii$ to $\vi$ exists in $H_{\uii}$.

\[
[\uii, \vii] \odot (\vii, \bii) \odot \bii\bi \odot (\bi, \vi).
\]

\begin{center}

\begin{tikzpicture}
    % Define points
    \node[draw, circle, fill=black, label=below:$s$, minimum size=4pt, inner sep=0pt] (s) at (-2,0) {};
    \node[draw, circle, fill=black, label=below:$t$, minimum size=4pt, inner sep=0pt] (t) at (9,0) {};

    % Draw path between s and t
    \draw[thick] (s) -- (t);

    % Define points where the circles touch the st path
    \node[draw, circle, fill=black, label=below:\scalebox{0.8}{$\aii$}, minimum size=4pt, inner sep=0pt] (ui) at (0.5,0) {};
    \node[draw, circle, fill=black, label=left:\scalebox{0.8}{$\uii$}, minimum size=4pt, inner sep=0pt] (piu) at (0.8,1) {};
    \draw[thick] (ui) -- (piu);

    \node[draw, circle, fill=black, label=above:\scalebox{0.8}{$\vi$}, minimum size=4pt, inner sep=0pt] (vi) at (8, 0.5) {};
    \node[draw, circle, fill=black, label=below:\scalebox{0.8}{$\bi$}, minimum size=4pt, inner sep=0pt] (bi) at (8, 0) {};
    \draw[thick] (vi) -- (bi);

    % Add u_i and ball connections
    \node[draw, circle, fill=black, label=above:\scalebox{0.8}{$\vii$}, minimum size=4pt, inner sep=0pt] (piv) at (6, 0.5) {};
	    \node[draw, circle, fill=black, label=below:\scalebox{0.8}{$\bii$}, minimum size=4pt, inner sep=0pt] (bii) at (5.5,0) {};
    \draw[thick] (piv) -- (bii);

	    \draw[ thick] (piu) to[bend left=10] node[midway, above, sloped] {\scalebox{0.7}{$est(\uii,\vii$)}} (piv);
	    \draw[line width=1mm,pink,opacity=0.6] (piu) to[bend left=10]  (piv);
	\draw[line width=1mm,pink,opacity=0.6] (piv)--(bii);

	\draw[line width=1mm,pink,opacity=0.6] (bii)--(bi);
	\draw[line width=1mm,pink,opacity=0.6] (bi)--(vi);

\end{tikzpicture}

\end{center}

Note that the first edge is a weighted edge and is added in point (1) in the above enumeration. Using \Cref{lem:onpath}, all edges in path $\bii t$ have degree $\TL(\ti{i+1})$. Thus, the subpath $\bii\bi$ is in $H_{\uii}$. The second and the fourth edge are added due to point (3) in the above enumeration. Finding the shortest path from $\uii$ in $H_{\uii}$ yields a new estimate for the distance between $\uii$ and $\vi$:

\begin{align}
    est(\uii, \vi) &\leq |[\uii, \vii] \odot (\vii, \bii) \odot \bii\bi \odot (\bi, \vi)| \notag\\
    &= |[\uii, \vii]| + |(\vii, \bii)| + |\bii\bi| + |(\bi, \vi)| \notag\\
    &= est(\uii, \vii) + 1 + |\bii\bi| + 1 \notag\\
    \intertext{Using \Cref{eq:2}, we get }
    &\leq \left(|\aii \bii| + 4(\lgn - (i+1))\right) +  |\bii \bi| + 2 \notag\\
    &= |\aii \bii| + |\bii \bi| + 4(\llni)-2 \notag\\
    &= |\aii \bi| + 4(\llni)-2 \label{eq:3}
\end{align}
\subsection{Estimate the distance between $\ui    $ and $\vi$}
Now, we use $\uii$ to estimate the distance between $\ui$ and $\vi$.
By \Cref{ass}, $\uii$ lies in $\ball_{i+1}(\ui)$.
Since $\uii$ is unknown a priori, we consider all vertices in $\ball_{i+1}(\ui)$ to estimate the distance between $\ui$ and $\vi$. The following algorithmic procedure  computes this estimate:

\begin{algorithm}[H]
\caption{Find the estimate between $\ui$ and $\vi$}
\label{alg:idea}
\ForEach{$(x,y) \in A_i \times A_i$}
{
	\ForEach{$w \in \ball_{i+1}(x)$}
	{
		$est(x,y) \le \min\{est(x,w)+est(w,y), est(x,y)\}$
	}
}
\end{algorithm}

After the end of the above algorithm, the new estimate between $\ui$ and $\vi$ will be calculated as:
\begin{align*}
    est(\ui, \vi) &\leq est(\ui,\uii) +  est(\uii,\vi)\\
    \intertext{Using \Cref{eq:3}, $est(\uii,\vi) \le |\aii \bi| + 4(\llni)-2$. Also, since $\uii \in \ball_{i+1}(\ui)$, we have calculated the exact distance from $\ui$ to $\uii$ using \Cref{lem:pivotandball}. Thus, we get:}
    &\le  |\ui\uii| + (|\aii\bi|+4(\llni)-2)\\
    \intertext{Using triangle inequality, we bound $|\ui\uii|$. Thus, we get:}
    &\le  |(\ui,\ai) \odot \ai\aii \odot (\aii,\uii)| + |\aii\bi|+4(\llni)-2\\
    &= 1+|\ai\aii|+1 + |\aii \bi| + 4(\llni)-2 \\
    &\le |\ai\bi|+4(\llni)\\
\end{align*}
This completes the proof of \Cref{lem:overviewmain}. Let us now look at the running time of the above algorithm. Using \Cref{lem:ballsize}, since the size of $\ball_{i+1}(\cdot)$ is $\TL(\ti{i+1})$, the algorithm's running time is:
\[
\TL\left(\sum_{x,y \in A_i \times A_i} \TL(\ti{i+1})\right) = \TL\left(\frac{n}{\ti{i}} \frac{n}{\ti{i}} \ti{i+1}\right) = \TL(n^2).
\]

Thus, our algorithm runs in $\TL(n^2)$ time and gives an $+O(\log\log n)$ approximation of the length of $st$ path. However, the above strategy depends on  \Cref{ass}. In reality, we do not require such a strong assumption. Even if $\ball_{i+1}(\ui)$ contains a vertex close to $\uii$, say within distance 7, we can still  obtain an $O(\lgn)$ additive approximation. But can we ensure even this weaker assumption. The key question is: can we guarantee the following closeness condition?

\begin{equation} \label{eq:4}
		\text{(Ensure Closeness)  $\ball_{i+1}(\ui)$ contains a vertex close to $\uii$, say at a distance $\leq 7$ from $\uii$}
\end{equation}

Surprisingly, the answer to this question is  yes.  In the next section, we design an algorithm $\es$ that enforces this condition --- one of our paper's main contributions. The algorithm $\es$ satisfies the following property:

\begin{lemma} \label{lem:near}
	After the execution of our algorithm $\textsc{EnsureCloseness}$, either $est(s,t) \le  2|st|$ or

	for all $0 \le i \le \lgn-1$, either
	\begin{enumerate}[label=(A\arabic*),nolistsep]
		\item\label{item:a1} $|s\ai| \le |t\bi|$ and $|s\ai|-|s~\pis| \le 3 $, or
		\item $|s\ai| \ge |t\bi|$ and $|t\bi|-|t~\pit| \le 3$
	\end{enumerate}
	Moreover, the running time of $\es$ is $\TL(n^2)$.
\end{lemma}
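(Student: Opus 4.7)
The plan is to design \es{} as a multi-level procedure that, for each sampling level $i$ from the deepest level $\lln$ down to $0$, propagates distance estimates through pivot vertices and their balls. At each level $i$, the algorithm does two things: (i) for each pivot $p \in A_i$, run a bounded BFS on the subgraph of edges of degree $\TL(\ti{i})$ and update $est(p, \cdot)$; and (ii) for each $s \in G$, perform the triangle-inequality update $est(s, t) \gets \min\{est(s, t), est(s, \pis) + est(\pis, t)\}$, and symmetrically through $\pit$. \Cref{lem:pivotandball} guarantees that $est(s, \pis)$ is the exact distance $|s~\pis|$, and \Cref{lem:ballsize}--\Cref{lem:degreeinball} ensure that each BFS in step (i) stays within the required size bound.

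For the running time, step (i) contributes $\TL(n \cdot \ti{i})$ edges per pivot (by the degree bound inside balls), and there are $\TL(n/\ti{i})$ pivots at level $i$, giving $\TL(n^2)$ per level. Step (ii) costs $\TL(n)$ per vertex per level. Summed over the $O(\lgn)$ levels, the total is $\TL(n^2)$.

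For correctness, I would argue by contrapositive: fix a pair $(s,t)$ and suppose that at some level $i$, both (A1) and (A2) fail. By symmetry (and the conditional in the lemma statement), assume WLOG that $|s\ai| \le |t\bi|$, so (A1) failing forces $|s~\pis| < |s\ai| - 3$. Writing $p = \pis$ and $d = |sp|$, this means $p \in A_i$ is a pivot strictly closer to $s$ than the first on-path vertex having a nearby $A_i$-pivot. The idea is to use $p$ to produce a detour $s \to p \to \ui \to t$ of length at most $2|st|$. Since $p$ and $\ui = \PI(\ai)$ both lie in $A_i$ near the path, the BFS from $p$ in step (i) discovers a short route to $\ui$, and from $\ui$ one can reach $t$ via \Cref{lem:onpath} (every edge on the $\ai t$ subpath has degree $\le \TL(\ti{i})$, so the subpath is preserved in step (i)'s BFS subgraph). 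Combining $est(s, p) = d$, the discovered estimate $est(p, \ui)$, and $est(\ui, t) \le 1 + |\ai t|$, one expects the inequality $est(s, t) \le 2|st|$ to fall out after exploiting $d < |s\ai| - 3$.

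The main obstacle is precisely in tightening this triangle-inequality chain: a naive sum of the three legs overshoots $2|st|$ by an additive $O(d)$ term, so the proof must leverage the strict inequality $d < |s\ai| - 3$ to argue that $p$ itself lies close to the $st$ path (otherwise the very existence of a pivot in $A_i$ so close to $s$ would have shifted $\ai$ toward $s$, contradicting the defining minimality of $\ai$). Making this geometric argument precise, and then threading it through the inductive invariant maintained by \es{} across levels $i = \lln, \lln-1, \dots, 0$, is where I expect the bulk of the technical work. The symmetric case $|s\ai| > |t\bi|$ is handled identically by swapping the roles of $s$ and $t$ and invoking (A2).
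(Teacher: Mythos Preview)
Your proposal has a genuine gap, and it stems from a misreading of \Cref{lem:onpath}. That lemma guarantees that every edge on the $s\ai$ subpath (and symmetrically on the $t\bi$ subpath) has degree at most $\TL(\ti{i})$; it says \emph{nothing} about the middle segment $\ai\bi$, which may contain arbitrarily high-degree vertices. So your claim that ``every edge on the $\ai t$ subpath has degree $\le \TL(\ti{i})$'' is false, and consequently a plain BFS from $p=\pis$ restricted to low-degree edges has no way to cross from the $s$-side of the path to $t$. Your proposed repair---arguing that $p$ must lie close to the $st$ path because otherwise $\ai$ would shift toward $s$---does not work either: $\ai$ is defined as the first vertex \emph{on the $st$ path} whose own $i$-th pivot is within distance~$1$, so the existence of some $A_i$-vertex close to $s$ but not adjacent to any path vertex does not contradict the minimality of $\ai$.

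The paper resolves exactly this obstacle by inserting into each $H_w$ the weighted edges $[w,x]$ of weight $est(w,x)$ for \emph{all} $x$, which carries along the precomputed $(2,1)$-approximation. In particular $est(\ui,\vi)\le 2|\ai\bi|+5$ (\Cref{lem:use21}) lets the BFS ``jump'' over the high-degree middle segment. The paper then runs the $\uf$ step \emph{twice}: the first pass, launched from $\vi$, uses the $(2,1)$ edge $[\vi,\ui]$ plus the low-degree path $\ui\!\to\!\ai\!\to\! s$ plus the pivot edge $[s,\pis]$ to establish $est(\vi,\pis)$; the second pass, launched from $\pis$, uses this freshly updated estimate as the edge $[\pis,\vi]$, then $(\vi,\bi)$ and the low-degree path $\bi t$, yielding $est(\pis,t)\le |s~\pis|+|s\ai|+2|\ai\bi|+|\bi t|+7$. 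Triangulation through $\pis$ and the assumption $|s~\pis|\le|s\ai|-4$ with $|s\ai|\le|t\bi|$ then collapse this to $est(s,t)\le 2|st|-1$. Both the weighted $(2,1)$ edges and the two-pass structure are essential and are missing from your outline.
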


First, note that $\ui$ is also a candidate for the $i$-th pivot of $s$. Thus,  $|s~\pis| \le |s\ui| \le |s\ai| + |(\ai,\ui)| = |s\ai| + 1$. Thus, $|s~\pis|$ may be greater than $|s\ai|$ by at most 1.  However, $|s~\pis|$ may be much smaller than $|s\ai|$. The lemma ensures this does not happen. We now provide a pictorial interpretation of this lemma. Let $p$ be the vertex on the $st$ path at a distance $|s~\pis|$ from $s$. Similarly, let $q$ be the vertex on the $st$ path at a distance $|t~\pit|$ from $t$. The lemma states that after executing our algorithm \textsc{EnsureCloseness}, either $est(s,t) \le 2|st|$, or one of the following holds:

\begin{enumerate}
    \item the distance between $p$ and $\ai$ is $\le 3$, or
    \item the distance between $q$ and $\bi$ is $\le 3$.
\end{enumerate}

\begin{figure}[hpt!]
\centering
\begin{tikzpicture}
    % Define points
    \node[draw, circle, fill=black, label=below:$s$, minimum size=4pt, inner sep=0pt] (s) at (0,0) {};
    \node[draw, circle, fill=black, label=below:$t$, minimum size=4pt, inner sep=0pt] (t) at (10,0) {};
    \node[draw, circle, fill=blue, label=above:$\ai$, minimum size=4pt, inner sep=0pt] (ai) at (3.5,0) {};
    \node[draw, circle, fill=blue, label=above:$\bi$, minimum size=4pt, inner sep=0pt] (bi) at (6.5,0) {};
    \node[draw, circle, fill=gray, label=right:$\pis$, minimum size=4pt, inner sep=0pt] (pis) at (1.1,-1.5) {};
    \node[draw, circle, fill=gray, label=left:$\pit$, minimum size=4pt, inner sep=0pt] (pit) at (8.9,-1.5) {};

    % Draw path between s and t
    \draw[thick] (s) -- (t);

    % Draw dashed circles centered at s and t
    \draw[thick, dashed] (s) circle[radius=1.85];
    \draw[thick, dashed] (t) circle[radius=1.85];

    % Define points where the circles touch the st path
    \node[draw, circle, fill=red, label=above left:$p$, minimum size=4pt, inner sep=0pt] (p) at (1.85,0) {};
    \node[draw, circle, fill=red, label=above right:$q$, minimum size=4pt, inner sep=0pt] (q) at (8.15,0) {};

    % Dashed lines from pis to s and pit to t
    \draw[dashed] (pis) -- (s);
    \draw[dashed] (pit) -- (t);

    % Curly braces showing distances
    \draw [decorate, decoration={brace, amplitude=11pt, mirror}, thick] (p) -- (ai)
        node[midway, below=10pt] {$\le 3$};
    \draw [decorate, decoration={brace, amplitude=11pt, mirror}, thick] (bi) -- (q)
        node[midway, below=10pt] {$\le 3$};

\end{tikzpicture}
\caption{$\es$ ensures that $(p,\ai)$ or $(q,\bi)$ are close to each other.}
\label{fig:2}
\end{figure}

The reader may ask how does \Cref{lem:near} help us remove  \Cref{ass} or ensure the weaker closeness condition in (\ref{eq:4})? Let us assume that after executing $\es$, $est(s,t) > |st|$. Also, let us assume $|s\aii| \le |t\bii|$ -- otherwise, a symmetric argument applies from the $t$ side. Consider the vertex $\ui$.  We will now show that $\ball_{i+1}(\ui)$ contains a vertex close to $\uii$. There are two cases:

\begin{enumerate}
\item $|\ai \aii| \le 5 $

By triangle inequality, $|\ui\uii| \le |(\ui,\ai)| + |\ai\aii| + |(\ai,\ui)|$. If $|\ai \aii| \le 5$, then $|\ui \uii| \le 7$, implying $\ui$ and $\uii$ are close. Thus, (\ref{eq:4}) holds trivially in this case.

\item $|\ai \aii| \ge 6$

Let $p$ be the vertex at a distance $|s~\piis|$ from $s$ on the $st$ path. By \Cref{lem:near}, $p$ lies within distance 3 of $\aii$ on the $st$ path.  
Let $z$ be the vertex exactly 3 away from $p$ on the $sp$ path, or equivalently, within distance 6 of $\aii$ on $s\aii$ (see \Cref{fig:containsy}). Ideally, we want $\uii \in \ball_{i+1}(\ui)$, but this is not  possible. Instead, we prove $z \in \ball_{i+1}(\ui)$, ensuring {\em closeness}. To this end, we show the following important lemma:

\begin{figure}[hpt!]
\centering
\begin{tikzpicture}
    % Define points
    \node[draw, circle, fill=black, label=below:$s$, minimum size=4pt, inner sep=0pt] (s) at (-2,0) {};
    \node[draw, circle, fill=black, label=below:$t$, minimum size=4pt, inner sep=0pt] (t) at (9,0) {};
    \node[draw, circle, fill=black, label=below:\scalebox{0.8}{$\aii$}, minimum size=4pt, inner sep=0pt] (ai) at (3.5,0) {};

    % Draw path between s and t
    \draw[thick] (s) -- (t);

    % Define points where the circles touch the st path
    \node[draw, circle, fill=red, label=below:\scalebox{0.8}{$p$}, minimum size=4pt, inner sep=0pt] (p) at (2.3,0) {};

    % Ball region for \ui
    \node[draw, circle, fill=black, label=above:\scalebox{0.8}{$\ui$}, minimum size=4pt, inner sep=0pt] (ui1) at (1, 0.5) {};
    \node[draw, circle, fill=black, label=below:\scalebox{0.8}{$\ai$}, minimum size=4pt, inner sep=0pt] (ai1) at (0.8, 0) {};
    \draw[thick] (ui1) -- (ai1);

    % Draw the dotted circle around \ui
    \draw[thick, dotted] (ui1) circle[radius=0.7];

    % Label for ball_i(\ui) with arrow pointing to the circle
    \node at (-0.9, 0.9) (label) {\scalebox{0.8}{$\ball_{i+1}(\ui)$}};
    \draw[->] (label.east) -- (0.3, 0.9);

    % Add \ui and ball connections
    \node[draw, circle, fill=black, label=above:\scalebox{0.8}{$\uii$}, minimum size=4pt, inner sep=0pt] (ui) at (3.7, 0.5) {};

    \draw[thick] (ui) -- (ai);
	\draw [decorate, decoration={brace, amplitude=6pt, mirror}, thick] (2.3,-0.5) -- (3.5,-0.5)
        node[midway, below=6pt] {\scalebox{0.8}{$\le 3$}};
    \draw [decorate, decoration={brace, amplitude=4pt, mirror}, thick] (1.5,-0.5) -- (2.25,-0.5)
        node[midway, below=4pt] {\scalebox{0.8}{$3$}};
    % Add y node
    \node[draw, circle, fill=blue, label=below:\scalebox{0.8}{$z$}, minimum size=4pt, inner sep=0pt] (y) at (1.5, 0) {};

\end{tikzpicture}

\caption{$p$ is the vertex on the $st$ path such that $|sp| = |s~\piis|$. $\ball_{i+1}(\ui)$ contains $z
$ which is close to $p$ and $p$ is close to $\uii$.}
\label{fig:containsy}
\end{figure}
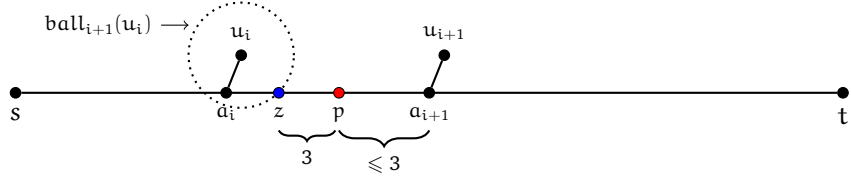

\begin{lemma}\label{lem:es}
	Assume \ref{item:a1} is true after the execution of $\es$ and $|\ai\aii| \ge 6$ for some $i \in [0 \dots \log\log n-2]$. Let $p$ be the vertex on the $st$ path at a distance $|s~\piis|$ from $s$.
	 Let $z$ be the vertex at a distance exactly 3 from $p$ on $sp$ path or at a distance $\le  6$ from $\aii$ on $s\aii$ path (See \Cref{fig:containsy}). Then, $z \in \ball_{i+1}(\ui)$.
\end{lemma}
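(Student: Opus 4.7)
\textbf{Proof plan for \Cref{lem:es}.} The plan is to pin down where $z$ sits on the $st$-path using the slack bound from $\es$, and then play two triangle inequalities from $\ui$ against each other: one going to $z$ along the $st$-path (short), and one going to an arbitrary vertex of $A_{i+1}$ (long by exactly one unit).

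First I would use \ref{item:a1} at level $i{+}1$ (which is the form relevant in the overview's case $|s\aii|\le |t\bii|$) to obtain $|s\aii|-|s\piis|\le 3$, equivalently $|sp|\ge |s\aii|-3$. Combined with the hypothesis $|\ai\aii|\ge 6$ and the fact that $|s\ai|\le |s\aii|$, this gives
\[
|sz| \;=\; |sp|-3 \;\ge\; |s\aii|-6 \;\ge\; |s\ai|,
\]
so $z$ lies on the $st$-path at or after $\ai$. Moreover, using $|s\piis|\le |s\uii|\le |s\aii|+1$ one also has $|sz|\le |s\aii|-2$, so $z$ sits strictly between $\ai$ and $\aii$ on the $st$-path. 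In particular the segment from $\ai$ to $z$ is a subpath of $st$ of length $|sz|-|s\ai|=|s\piis|-3-|s\ai|$.

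Next I would upper bound $|\ui z|$. Since $\ui = pivot_i(\ai)$ is at distance at most $1$ from $\ai$, concatenating the edge $(\ai,\ui)$ with the subpath above yields
\[
|\ui z| \;\le\; 1+\bigl(|s\piis|-3-|s\ai|\bigr) \;=\; |s\piis|-|s\ai|-2.
\]
Then I would lower bound $|\ui x|$ for every $x\in A_{i+1}$. By the defining property of $\piis$ (the nearest vertex of $A_{i+1}$ to $s$), any such $x$ satisfies $|sx|\ge |s\piis|$; combining with $|s\ui|\le |s\ai|+1$ and the triangle inequality,
\[
|\ui x| \;\ge\; |sx|-|s\ui| \;\ge\; |s\piis|-|s\ai|-1 \;>\; |\ui z|.
\]
Taking $x = pivot_{i+1}(\ui)$ gives $|\ui~pivot_{i+1}(\ui)|>|\ui z|$, which is exactly the condition $z\in\ball_{i+1}(\ui)$.

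The main step to get right is the geometric placement of $z$: it is the choice ``exactly $3$ before $p$'' that makes $z$ land on the $\ai$--$\aii$ portion of the $st$-path (using $|\ai\aii|\ge 6$), so that $|\ui z|$ can be bounded by a segment of the $st$-path of length $|s\piis|-3-|s\ai|$ plus one extra edge. That extra edge costs $+1$, whereas the triangle-inequality lower bound on $|\ui x|$ for $x\in A_{i+1}$ loses $+1$ to the $\ai$-to-$\ui$ detour; the single unit of slack between these two is precisely what \ref{item:a1} buys us after $\es$.
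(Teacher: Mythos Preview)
Your proof is correct and follows essentially the same approach as the paper: both hinge on the pair of triangle inequalities $|\ui z|\le 1+|\ai z|=|s\piis|-|s\ai|-2$ and $|\ui\,pivot_{i+1}(\ui)|\ge |s\piis|-|s\ai|-1$, with the one-unit gap forcing $z\in\ball_{i+1}(\ui)$. The only cosmetic difference is that the paper phrases it as a proof by contradiction (assuming $z\notin\ball_{i+1}(\ui)$ and deriving $|s\,pivot_{i+1}(\ui)|<|s\piis|$), whereas you argue directly; you also make explicit why $z$ lands on the $\ai$--$\aii$ segment, a step the paper leaves to the reader via the hypothesis $|\ai\aii|\ge 6$.
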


\begin{proof}
Suppose, for contradiction, that $\ball_{i+1}(\ui)$ does not contain $z$. Then, the pivot selection ensures
\begin{equation} \label{eq:5}
	|\ui pivot_{i+1}(\ui)| \le  |\ui z| \le |(\ui, \ai) \odot \ai z| = 1 + |\ai z|.
\end{equation}

This implies the existence of a path from $s$ to $pivot_{i+1}(\ui)$ of length:

\begin{align*}
     |s \ai \odot (\ai, \ui) \odot \ui pivot_{i+1}(\ui)| &= |s \ai|+|(\ai, \ui)| + |\ui pivot_{i+1}(\ui)|\\
     \intertext{Using \Cref{eq:5}, $|\ui pivot_{i+1}(\ui)| \le 1+|\ai z|$. Thus, we get:}
     &\le |s \ai| + 1 + 1 + |\ai z| \\
    &= |s z| + 2 \\
 	\intertext{But $z$ is at a distance 3 away from $p$, Thus, $|sz| = |sp|-3$. Thus, we get}
    &= |s p|-3+2\\
    &=|sp|-1
    \intertext{Using lemma's condition, $|sp| = |s~pivot_{i+1}(s)|$. Thus, we get:}
    &= |s~pivot_{i+1}(s)|-1
\end{align*}

Thus, $pivot_{i+1}(\ui)$ lies at a distance  $<|s~pivot_{i+1}(s)|$ from $s$.  But $pivot_{i+1}(\ui) \in A_{i+1}$ and by definition, no vertex in $A_{i+1}$ can be closer to $s$ than $pivot_{i+1}(s)$, leading to a contradiction. Hence, $\ball_{i+1}(\ui)$ must contain $z$.

\end{proof}

\end{enumerate}

 This concludes the overview of our algorithm.

\section{Organisation of the paper}
While working on this problem, we initially proved \Cref{thm:main} for $k = \log n$ and later realized that it can be generalized to any $k$. To reflect this progression, we  present our results in the order we discovered it. 
In \Cref{sec:es}, we prove \Cref{lem:near}. Armed with this tool, we prove \Cref{thm:main} when $k = \log n$ in \Cref{sec:main} and \Cref{sec:general}. In \Cref{sec:genk}, we generalize our result and prove \Cref{thm:main}.

\section{Ensuring  closeness}
\label{sec:es}

In this section, we will prove \Cref{lem:near}, which will ensure the closeness condition in (\ref{eq:4}). To this end, we will crucially use the  following lemma which follows directly from the fact that we have precomputed a $(2,1)$-approximation of the shortest path.

\begin{lemma} \label{lem:use21}
    For each $i \in [0,\lln]$, $est(\ui, \vi) \leq 2|\ai \bi| + 5$.
\end{lemma}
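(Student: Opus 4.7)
The plan is straightforward and relies on just two ingredients: the pre-existing $(2,1)$-approximation stored in $est(\cdot,\cdot)$, and the defining property of $\ai,\ui,\bi,\vi$ which keeps each pivot within unit distance of its anchor on the $st$ path.

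First I would invoke the global assumption from the Preliminaries that $est(x,y)\le 2|xy|+1$ for every vertex pair; instantiated at $(\ui,\vi)$, this gives $est(\ui,\vi)\le 2|\ui\vi|+1$. Next, I would bound $|\ui\vi|$ by the triangle inequality, routing through $\ai$ and $\bi$:
\[
|\ui\vi|\le |\ui\ai|+|\ai\bi|+|\bi\vi|.
\]
By the definition of $\ai$ and $\ui:=pivot_i(\ai)$, we have $|\ui\ai|\le 1$, and symmetrically $|\bi\vi|\le 1$. Plugging these into the previous display yields $|\ui\vi|\le |\ai\bi|+2$, and substituting back gives
\[
est(\ui,\vi)\le 2(|\ai\bi|+2)+1 = 2|\ai\bi|+5,
\]
which is the claimed bound.

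There is no real obstacle here; the only thing to be careful about is that $\ai$, $\bi$, $\ui$, $\vi$ are guaranteed to exist in the first place, which is precisely where \Cref{eq:1} is used (together with nested sampling giving the pivots at every level $0\le i\le \lln$). Once existence is granted, the argument is two lines of triangle inequality combined with the $(2,1)$-approximation baseline, and no use of \Cref{ass}, $\es$, or the ball machinery is needed for this lemma.
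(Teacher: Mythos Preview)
Your proof is correct and essentially identical to the paper's own argument: both invoke the precomputed $(2,1)$-approximation to get $est(\ui,\vi)\le 2|\ui\vi|+1$, bound $|\ui\vi|\le |\ai\bi|+2$ via the triangle inequality and the unit-distance property of pivots, and conclude. Your added remark about the existence of $\ai,\bi,\ui,\vi$ relying on \Cref{eq:1} is a reasonable side note but is not part of the paper's proof of this particular lemma.
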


\begin{proof}
Since we have precomputed $(2,1)$-approximation of the length of the path between $\ui$ and $\vi$, we get:
\begin{align*}
    est(\ui, \vi) &\leq 2|\ui \vi| + 1 \\
    &\leq 2(|(\ui, \ai)| + |\ai \bi| + |(\bi, \vi)|) + 1\hspace{0.8cm}(\text{Using triangle inequality})\\
    &= 2(1 + |\ai \bi| + 1) + 1\\
    &= 2|\ai \bi| + 5
\end{align*}
\end{proof}

Let us now describe $\es(i)$. We call this function for all $i \in [0 \dots \lln]$. The algorithm can be divided into three parts. We now describe each part in detail.

\subsection{Part 1: Estimate the distance between $\vi$ and $\pis$}

In Part 1 of $\es(i)$, we process all vertices in $A_i$, including $\vi$. In $\uf(\vi,i)$, we construct a graph $H_{\vi}$ containing
	\begin{enumerate}[noitemsep]
    \item Edges from $\vi$ to every vertex $x \in G$, weighted by $est(\vi, x)$.
    \item All edges with degree $\TL(\ti{i})$.
    \item For each vertex $x \in G$ and each $i \in [0\dots \lln]$, an edge  $[x,pivot_i(x)]$ with weight $|x~pivot_i(x)|$.
\end{enumerate}

At most $\TL(n)$ edges are added in steps (1) and (3), while step (2) adds at most $\TL(n \ti{i})$ edges. Thus, the size of $H_{\vi}$ is $\TL(n \ti{i})$. We then compute shortest paths from $\vi$ in $H_{\vi}$, updating $est(\vi, \cdot)$ as needed. The total runtime for Part 1 of $\es(i)$, summed over all vertices in $A_i$, is
\[
\sum_{w \in A_i} \TL(n \ti{i}) = \TL\left(\frac{n}{\ti{i}} \times n \ti{i} \right) = \TL(n^2).
\]

\begin{minipage}{0.335\textwidth}
    \begin{algorithm}[H]
        \caption{\textsc{$\es(i)$}}
		{\color{blue}\tcc{Part 1}}
		\ForEach{$w \in A_i$}
		{
			$\uf(w,i)$
		}
		{\color{blue}\tcc{Part 2}}
		\ForEach{$w \in A_i$}
		{
			$\uf(w,i)$
		}
		{\color{blue}\tcc{Part 3}}
		\ForEach{vertex $s \in G$}
		{
			$\tri(s,i)$
		}
    \end{algorithm}
\end{minipage}
\begin{minipage}{0.62\textwidth}
    \begin{algorithm}[H]
        \caption{$\uf(w,i)$}
		Make a graph $H_w$ that contains:\\
		-- An edge from $w$ to all $x \in G$ with weight $est(w,x)$ \label{line:uf1}\\

		-- All edges with degree $\TL(\ti{i})$\label{line:uf2}\\
		-- For each vertex $x \in G$ and for each $i \in [0 \dots \lln]$, an edge  $[x,pivot_i(x)]$ with weight $|x~pivot_i(x)|$\label{line:uf3}\\
		Find the shortest path from $w$ to all other vertices in $H_w$ and update $est(w,\cdot)$ and $est(\cdot,w)$    \end{algorithm}
    \begin{algorithm}[H]
        \caption{$\tri(s,i)$}
		\ForEach{ $t \in G$}
		{
			$est(s,t) \leftarrow \min\{est(s,t), |s~\pis| + est(\pis,t) \}$\\
			%$est(t,s) \leftarrow \min\{est(t,s), |s~\pis| + est(\pis,t\}$
		}
    \end{algorithm}
\end{minipage}

We claim that Part 1 of the algorithm would give us a new estimate between $\vi$ and $\pis$. Indeed, there is a path from $\vi$ to $\pis$ in $H_{\vi}$ as follows:
	$$[\vi,\ui] \odot (\ui,\ai) \odot \ai s \odot [s,\pis]$$
	The first edge is added in \Cref{line:uf1} of $\uf(\vi,i)$. The second and last edge are added in \Cref{line:uf3}. Also, using \Cref{lem:onpath}, each edge in $s\ai$ has degree $\TL(\ti{i})$ and is added in \Cref{line:uf2}. Thus, the above path is in $H_{\vi}$ and $est(\vi,\pis)$ must be updated to:
\begin{center}
\begin{tikzpicture}
    % Define points
    \node[draw, circle, fill=black, label=below:$s$, minimum size=4pt, inner sep=0pt] (s) at (-2,0) {};
    \node[draw, circle, fill=black, label=below:$t$, minimum size=4pt, inner sep=0pt] (t) at (9,0) {};

    % Draw path between s and t
    \draw[thick] (s) -- (t);

    % Define points where the circles touch the st path

    \node[draw, circle, fill=black, label=left:\scalebox{0.8}{$\pis$}, minimum size=4pt, inner sep=0pt] (pis) at (-1,1) {};
    \draw[thick] (s) -- (pis);

    \node[draw, circle, fill=black, label=above:\scalebox{0.8}{$\ui$}, minimum size=4pt, inner sep=0pt] (ui) at (2, 0.5) {};
    \node[draw, circle, fill=black, label=below:\scalebox{0.8}{$\ai$}, minimum size=4pt, inner sep=0pt] (ai) at (2, 0) {};
    \draw[thick] (ui) -- (ai);

    % Add u_i and ball connections
    \node[draw, circle, fill=black, label=above:\scalebox{0.8}{$\vi$}, minimum size=4pt, inner sep=0pt] (vi) at (7, 0.5) {};
	    \node[draw, circle, fill=black, label=below:\scalebox{0.8}{$\bi$}, minimum size=4pt, inner sep=0pt] (bi) at (6.5,0) {};
    \draw[thick] (vi) -- (bi);

	    \draw[ thick] (ui) to[bend left=10] node[midway, above] {\scalebox{0.7}{$est(\vi, \ui)$}} (vi);
	    \draw[line width=1mm,pink,opacity=0.6] (ui) to[bend left=10]  (vi);
	    \draw[line width=1mm,pink,opacity=0.6] (ui)--(ai);
	    \draw[line width=1mm,pink,opacity=0.6] (ai)--(s);
	    \draw[line width=1mm,pink,opacity=0.6] (s)--(pis);

\end{tikzpicture}
\end{center}
\begin{align}
    est(\vi,\pis) & \leq |[\vi,\ui] \odot (\ui, \ai) \odot  \ai s \odot [s,\pis]| \notag\\
    &\leq est(\vi,\ui) + |(\ui, \ai)| + |\ai s| + |[s, \pis]| \notag \\
    \intertext{Using \Cref{lem:use21}, $est(\vi,\ui) \le 2|\ai\bi|+5$. Thus, we get:}
    &\le  (2|\ai\bi|+5) +1+|s\ai|+|s~\pis| \notag \\
    &= |s~\pis| + |s\ai|+2|\ai\bi|+6 \label{eq:6}
\end{align}

By symmetry, we also get a better estimate between $\ui$ and $\pit$.

\subsection{Part 2: Estimate the distance from $\pis$ to $t$}
The second part of $\es(i)$ is same as the first one. So, its running time is $\TL(n^2)$. Consider the function $\uf(\pis,i)$. We claim that there is a path from $\pis$ to $t$ in $H_{\pis}$ as follows:
	$$[\pis,\vi] \odot (\vi,\bi) \odot \bi t$$.
\begin{center}

\begin{tikzpicture}
    % Define points
    \node[draw, circle, fill=black, label=below:$s$, minimum size=4pt, inner sep=0pt] (s) at (-2,0) {};
    \node[draw, circle, fill=black, label=below:$t$, minimum size=4pt, inner sep=0pt] (t) at (9,0) {};

    % Draw path between s and t
    \draw[thick] (s) -- (t);

    % Define points where the circles touch the st path

    \node[draw, circle, fill=black, label=left:\scalebox{0.8}{$\pis$}, minimum size=4pt, inner sep=0pt] (pis) at (-1,1) {};
    \draw[thick] (s) -- (pis);

    % Add u_i and ball connections
    \node[draw, circle, fill=black, label=above:\scalebox{0.8}{$\vi$}, minimum size=4pt, inner sep=0pt] (vi) at (7, 0.5) {};
	    \node[draw, circle, fill=black, label=below:\scalebox{0.8}{$\bi$}, minimum size=4pt, inner sep=0pt] (bi) at (6.5,0) {};
    \draw[thick] (vi) -- (bi);

	    \draw[ thick] (pis) to[bend left=10] node[midway, above] {\scalebox{0.7}{$est(\pis, \vi)$}} (vi);
	    \draw[line width=1mm,pink,opacity=0.6] (pis) to[bend left=10]  (vi);
	    \draw[line width=1mm,pink,opacity=0.6] (vi)--(bi);
	    \draw[line width=1mm,pink,opacity=0.6] (bi)--(t);

\end{tikzpicture}

\end{center}

	The first edge is added in \Cref{line:uf1} of $\uf(\pis,i)$. The second edge $(\vi,\bi)$ is added in \Cref{line:uf3}. Also, using \Cref{lem:onpath}, each edge in $\bi t$ has degree $\TL(\ti{i})$ and is added in \Cref{line:uf2}. Thus, the above path is in $H_{\pis}$ and $est(\pis,t)$ must be updated to:
\begin{align}
	est(\pis,t)&\le |[\pis,\vi] \odot (\vi,\bi) \odot \bi t|\notag\\
	&= est(\pis,\vi) + |(\vi, \bi)| + |\bi t| \notag\\
	\intertext{Using \Cref{eq:6}, we get}
	&\le\left(|s~\pis| + |s\ai|+2|\ai\bi|+6\right) + 1+ |\bi t|\notag\\
	&=|s~\pis| + |s\ai|+2|\ai\bi|+ |\bi t|+7\label{eq:7}
\end{align}

By symmetry, we also get a better estimate between $\pit$ and $s$.
\subsection{Part 3: Estimate the distance between $s$ and $t$}

In Part 3 of $\es(i)$, we call the procedure $\tri(s,i)$. $\tri(s,i)$ estimates the shortest path from $s$ to each $t \in G$ using  $\pis$.  For a fixed $s$, the running time of $\tri(s,i)$ is $O(n)$. Summing over all vertices, the running time of Part 3 of our algorithm is $O(n^2)$. After executing $\tri(s,i)$, we get a better estimate between $s$ and $t$ as follows:
\begin{align}
    est(s,t) &\leq |s~\pis| +est(\pis,t) \notag\\
    \intertext{Using \Cref{eq:7}, we get}
    &=|s~\pis| + (|s~\pis| + |s\ai|+2|\ai\bi|+ |\bi t|+7) \notag \\
    &=2|s~\pis| +  |s\ai|+2|\ai\bi|+ |\bi t|+7 \label{eq:8}
\end{align}
Again, by symmetry, we can show that $est(s,t) \le 2|t~\pit| + |t\bi|+2|\ai\bi| +|s\ai|+7$ after Part 3 of our algorithm.
We are now ready to prove \Cref{lem:near}.

\subsection{Proof of \Cref{lem:near}}

For each $i$, $\es(i)$ takes $\TL(n^2)$ time. Thus, the total time taken by $\es$ for all $i$ is also $\TL(n^2)$.

Let us assume that $est(s,t) > 2|st|$ after the execution of $\es(i)$. Also, let us assume that  $|s\ai| \le |t\bi|$ --- a symmetrical argument applies when $|t\bi| \le |s\ai|$. Let us assume for contradiction that  $|s\ai|-|s~\pis| \ge  4$, or $|s~\pis| \le |s\ai|-4$. Putting this in \Cref{eq:8}, we get:
\begin{align}
    est(s,t) &\leq 2|s~\pis| +  |s\ai|+2|\ai\bi|+ |\bi t|+7\notag \\
    &\leq 2(|s\ai| - 4) + |s\ai| + 2|\ai\bi| + |\bi t| + 7 \notag \\
    &\leq 2|s\ai| + |s\ai| + 2|\ai\bi| + |\bi t| -1 \notag \\
    \intertext{Since $|s\ai|\le |t\bi|$, $2|s\ai| \le |s\ai|+|t\bi|$,}
    &\leq (|s\ai| +|t\bi|)  + |s\ai| + 2|\ai\bi| + |\bi t| -1\notag \\
    &= 2(|s\ai| + |\ai\bi| + |\bi t|) - 1 \notag \\
    &= 2|st| - 1\notag
\end{align}
But this is a contradiction as we assumed that $est(s,t) > 2|st|$ after executing $\es(i)$. This implies that our assumption is wrong. Thus, $|s\ai|-|s~\pis| \le  3$. This completes the proof of \Cref{lem:near}.

\section{Our Main Algorithm}
\label{sec:main}
In the next two sections, we will focus on $st$ pair for which $est(s,t)> 2|st|$ after executing $\es$. All our lemmas will have this as its pre-condition and for brevity, we will not mention it in our lemmas.
Our algorithm will run in $\lgn$ iteration from $i=\lln$ to $0$. We will show the following main lemma.

\begin{lemma} \label{lem:mainclaim}
	 After the  $i$-th iteration of our algorithm, $est(\ui,\vi) \le |\ai\bi|+18(\lgn-i)$.
\end{lemma}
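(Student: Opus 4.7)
The plan is induction on $i$, descending from $\lln$ to $0$. For the base case $i=\lln$, the target reduces to $est(u_{\lln},v_{\lln})\le |a_{\lln}b_{\lln}|+18$, which follows from the stronger $+4$ guarantee of the base-case procedure in \Cref{sec:base}.

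For the inductive step, assume $est(\uii,\vii)\le |\aii\bii|+18(\lgn-i-1)$ after iteration $i+1$. Iteration $i$ proceeds in a \emph{bridge} phase and a \emph{descent} phase. In the bridge phase, for each $w\in A_{i+1}$ the algorithm runs $\uf(w,i+1)$ on a graph $H_w$ containing (i) a weighted edge from $w$ to every $x\in G$ of weight $est(w,x)$, (ii) every unweighted edge of degree $\le \TL(\ti{i+1})$, and (iii) a pivot edge $[x,pivot_j(x)]$ of weight $|x\,pivot_j(x)|$ for every $x$ and every $j$. For $w=\uii$, the path $[\uii,\vii]\CON(\vii,\bii)\CON \bii\bi\CON(\bi,\vi)$ lies in $H_{\uii}$: the first edge is the new weighted edge, the two pivot edges come from (iii), and every edge of $\bii\bi$ has degree $\le \TL(\ti{i+1})$ by \Cref{lem:onpath}. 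Dijkstra therefore yields
\begin{equation}\label{eq:plan-star}
est(\uii,\vi)\;\le\; est(\uii,\vii)+1+|\bii\bi|+1 \;\le\; |\aii\bi|+18(\lgn-i-1)+2.
\end{equation}

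In the descent phase I would use \Cref{lem:near}. WLOG $|s\ai|\le|t\bi|$ and $|s\ai|-|s\pis|\le 3$ (the other case is handled symmetrically from the $t$-side). If $|\ai\aii|\le 5$, then $|\ui\uii|\le 7$ and a vertex near $\uii$ already lies in $\ball_{i+1}(\ui)$; if $|\ai\aii|\ge 6$, then \Cref{lem:es} provides a vertex $z$ on the $s\aii$ subpath with $z\in\ball_{i+1}(\ui)$ and $|z\aii|\le 6$, hence $|z\uii|\le 7$. In either subcase \Cref{lem:pivotandball} gives $est(\ui,z)=|\ui z|\le 1+|\ai\aii|+6$. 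The ball-iteration step (\Cref{alg:idea}) then performs $est(\ui,\vi)\le est(\ui,z)+est(z,\vi)$. Using $est(z,\vi)\le |z\uii|+est(\uii,\vi)\le 7+|\aii\bi|+18(\lgn-i-1)+2$ (arranged by the same bridge phase, whose low-degree edges of rule (ii) support short hops from $\uii$), together with \eqref{eq:plan-star} and $|\ai\aii|+|\aii\bi|=|\ai\bi|$, I obtain
$$est(\ui,\vi)\;\le\;(1+|\ai\aii|+6)+7+|\aii\bi|+18(\lgn-i-1)+2 \;=\; |\ai\bi|+18(\lgn-i)-2,$$
closing the induction with two units of slack.

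The main obstacle is the $est(z,\vi)$ term: because $z\notin A_{i+1}$ (as $\ball_{i+1}(\ui)\cap A_{i+1}=\emptyset$), the bridge phase does not directly compute $est(z,\vi)$, so one must verify that the shortest-path computation in $H_{\uii}$ propagates good estimates to all vertices within a short hop of $\uii$ via the low-degree edges of rule (ii). The constant $18$ is chosen generously to absorb both the $+7$ gap in \Cref{lem:es} (on whichever of the two endpoints the asymmetric branch of \Cref{lem:near} selects) and the two $+1$ pivot hops of the bridge; the overview's tighter $+4$ per iteration suffices only under the idealized \Cref{ass}, whereas \Cref{lem:es} only guarantees closeness up to distance $6$.
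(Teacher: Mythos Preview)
Your proposal has a genuine gap, which you yourself flag but do not resolve: the term $est(z,\vi)$. Since $z\notin A_{i+1}$, the bridge phase never touches $est(z,\cdot)$; the call $\uf(\uii,i+1)$ runs Dijkstra \emph{from} $\uii$ in $H_{\uii}$ and updates only $est(\uii,\cdot)$ and $est(\cdot,\uii)$. Your suggested fix --- that the shortest-path computation in $H_{\uii}$ ``propagates good estimates to all vertices within a short hop of $\uii$'' --- is simply false for the algorithm as stated: there is no step that triangulates $est(z,\vi)$ through $\uii$. So when the descent phase reads $est(z,\vi)$, all it has is the precomputed $(2,1)$-approximation $2|z\vi|+1$, which is far too weak for the induction.

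The paper's remedy is to change the descent-phase update itself: instead of the overview's $est(x,y)\leftarrow est(x,w)+est(w,y)$, it uses
\[
est(x,y)\leftarrow est(x,w)+est\bigl(w,pivot_{i+1}(w)\bigr)+est\bigl(pivot_{i+1}(w),y\bigr),
\]
routing through $pivot_{i+1}(w)\in A_{i+1}$, for which the bridge phase \emph{can} manufacture a good estimate to $\vi$. This in turn forces \emph{two} passes of $\uf$ over $A_{i+1}$, not one: the first pass (run from $\vii$) yields $est(\vii,pivot_{i+1}(q))$ for $q$ on $s\aii$ via the path $[\vii,\uii]\odot(\uii,\aii)\odot \aii q\odot[q,pivot_{i+1}(q)]$; the second pass (run from $pivot_{i+1}(q)$) then uses that fresh estimate as its weighted edge and extends it to $\vi$ along $\bii\bi$. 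A single bridge pass cannot chain these two hops. Two smaller points: your subcase $|\ai\aii|\le 5$ is not actually handled --- $|\ui\uii|\le 7$ does not place any useful vertex in $\ball_{i+1}(\ui)$, and the paper instead takes $w=\ai$ via the alternative clause $x=pivot_i(w)$ in the descent update; and your WLOG should be taken at level $i{+}1$ (i.e.\ $|s\aii|\le|t\bii|$), since \Cref{lem:es} needs the bound on $|s~pivot_{i+1}(s)|$.
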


Notice that the above algorithm directly implies  \Cref{thm:main} when $k= \log n$. Indeed, after  the last iteration of our algorithm, that is when $i=0$,
$$est(u_0,v_0) \le  |a_0b_0| + 18\lgn$$
Using the fact that $u_0 =a_0 = s$ and $v_0 =b_0=t$, we get: $est(s, t) \le |s t| + 18\lgn$. This yields a 2-approximation of $st$ path when $|st| \geq 18\lgn$. This proves \Cref{thm:main} when $k= \log n$. The next two sections focus on proving the above lemma using induction on $i$ where $i$ runs from $\lln$ to 0. Let us see the base case first.

\subsection{Our Algorithm: The base case}
\label{sec:base}
In this section, we design an algorithm that will give us a good estimate between  $u_{\lgn-1}$ and $v_{\lgn-1}$. Using \Cref{eq:1}, let us assume that  the highest-degree vertex on the $st$ path has a degree in the range $[2^\ell,2^{\ell+1}]$, where $\ell$ is an integer in $[\log n/2 \dots  \log n]$ . Since $\ell$ is unknown in advance, we process all $\ell \in [\log n/2 \dots  \log n]$. For each $\ell$, we sample a set $B_\ell$ from $G$ with probability $\frac{1}{2^\ell}$. With high probability, the size of $B_\ell$ is $\TL(n/2^\ell)$. We now use the following algorithm to compute the shortest path from a vertex in $B_\ell$.

\begin{algorithm}
\caption{The base case}
\label{alg:base}
\ForEach{$\ell \in [\log n/2, \log n]$}
{
\ForEach{$w$ in $B_\ell$}
	{
		Make a new graph $H_w$ that contains:\\
		-- All edges adjacent to $w$ in $G$\label{line:base4}\\
		-- All edges of degree $\le \TL(2^{\ell+1})$ \label{line:base5}\\
		Find the shortest path from $w$ in $H_w$ and update $est(w,\cdot)$ and $est(\cdot,w)$ \\

		\ForEach{$x,y \in A_{\lgn-1} \times A_{\lgn-1}$}
		{
			$est(x,y) \leftarrow \min\{ est(x,y), est(x,w) + est(w,y)\}$\\
		}
	}
}

%{\color{blue}\tcc{Part 2}}

%\ForEach{$x \in A_{\lln}$}
%{
%	$\uf(x,\lln)$
%}

%{\color{blue}\tcc{Part 3}}
%\ForEach{$x \in A_{\lln}$}
%{
%	$\uf(x,\lln)$
%}
%{\color{blue}\tcc{Part 4}}
%\ForEach{$s \in G$}
%{
%	$\tri(s,i)$
%}
\end{algorithm}

We iterate over all vertices in $B_\ell$ for each $\ell \in [\log n/2, \log n]$. Fix a $\ell$ and consider a vertex $w \in B_\ell$. We construct $H_w$ that contains all edges adjacent to $w$ in $G$ and all edges of degree $\TL(2^{\ell+1})$. Thus, the size of $H_w$ is $\TL(n 2^{\ell+1})$. We then find shortest path from $w$  in $H_w$ and update $est(w,\cdot)$ as required.

Next, we update our estimate between all pairs in $A_{\lgn-1}$. Since, with a high probability, $A_{\lgn-1}$ contains $\TL(\frac{n}{2^{2^{\lln}}}) = \TL(\sqrt{n})$ vertices, this step takes $\TL(n)$ time. Thus, for a single $w \in B_\ell$, the total time taken is $\TL(n 2^{\ell+1})$. Summing over all vertices in $B_\ell$ and all $\ell$, the total running time is:
\begin{align*}
    \sum_{\ell=\log n/2}^{\log n} \sum_{w \in B_\ell} \TL(n2^{\ell+1})
    &= \sum_{\ell=\log n/2}^{\log n} \TL\left(\frac{n}{2^\ell} \times n 2^{\ell+1}\right) \\
    &= \sum_{\ell=\log n/2}^{\log n} \TL(n^2) \\
    &= \TL(n^2).
\end{align*}
Using \Cref{eq:1}, let $p$ be the vertex on $st$ path with max degree, say in the range  $[2^\ell,2^{\ell+1})$ where $\ell > \log n/2$.  Using \Cref{lem:onpath}, $p$ cannot be on $sa_{\lln}$ or $tb_{\lln}$ path. Thus, it lies between $a_{\lln}$ and $b_{\lln}$ on the $st$ path (or $a_{\lln}$ or $b_{\lln}$ may be same as $p$). With a high probability,   there is a vertex, say $w \in B_\ell$ adjacent to $p$.   We claim that $H_w$ will contain a path from $w$ to $u_{\lln}$ as follows:
  $$(w,p)\odot p\ra a_{\lgn-1} \odot (a_{\lgn-1},u_{\lgn-1})$$ 
  
  \begin{center}

\begin{tikzpicture}
    % Define points
    \node[draw, circle, fill=black, label=below:$s$, minimum size=4pt, inner sep=0pt] (s) at (-2,0) {};
    \node[draw, circle, fill=black, label=below:$t$, minimum size=4pt, inner sep=0pt] (t) at (9,0) {};

    % Draw path between s and t
    \draw[thick] (s) -- (t);

    % Define points where the circles touch the st path
    \node[draw, circle, fill=black, label=below:\scalebox{0.8}{$p$}, minimum size=4pt, inner sep=0pt] (p) at (3.8,0) {};
    \node[draw, circle, fill=black, label=left:\scalebox{0.8}{$w$}, minimum size=4pt, inner sep=0pt] (w) at (4,0.5) {};
    \draw[thick] (p) -- (w);

    \node[draw, circle, fill=black, label=above:\scalebox{0.8}{$u_{\lln}$}, minimum size=4pt, inner sep=0pt] (ui1) at (2, 0.5) {};
    \node[draw, circle, fill=black, label=below:\scalebox{0.8}{$a_{\lln}$}, minimum size=4pt, inner sep=0pt] (ai1) at (2, 0) {};
    \draw[thick] (ui1) -- (ai1);

    % Add u_i and ball connections
    \node[draw, circle, fill=black, label=above:\scalebox{0.8}{$v_{\lln}$}, minimum size=4pt, inner sep=0pt] (ui) at (6, 0.5) {};
	    \node[draw, circle, fill=black, label=below:\scalebox{0.8}{$b_{\lln}$}, minimum size=4pt, inner sep=0pt] (ai) at (5.5,0) {};
    \draw[thick] (ui) -- (ai);

	    \draw[line width=1mm,pink,opacity=0.6] (ui1)--(ai1);
	    \draw[line width=1mm,pink,opacity=0.6] (ai1)--(p);
	    \draw[line width=1mm,pink,opacity=0.6] (p)--(w);

\end{tikzpicture}

\end{center}
  The first edge is due to \Cref{line:base4} of \Cref{alg:base}. The second path and the third edge is due to \Cref{line:base5} and the fact that $p$ is the vertex of highest degree in $st$ path. After we update the estimates, $est(u_{\lgn-1},w) \le |(w,p)| + |pa_{\lln}| +|(a_{\lln},u_{\lln})|=|pa_{\lgn-1}|+2$. Similarly, $est(w, v_{\lgn-1}) \le |pb_{\lgn-1}|+2$.  Since we  estimated the distance between  $u_{\lgn-1}$ and $v_{\lgn-1}$ using $w$ in the above algorithm, we have, 
 \begin{align}
est(u_{\lgn-1},v_{\lgn-1}) &\le est(u_{\lgn-1},w) + est(w,v_{\lgn-1})\notag\\
 & \le (|pa_{\lgn-1}|+2) + (|pb_{\lgn-1}|+2)\nonumber\\
	& = |a_{\lgn-1}b_{\lgn-1}|+4\notag
\end{align}

Thus, we have proven \Cref{lem:mainclaim} for the base case, i.e., when $i = \lln$. We now proceed to the general case, where $i < \lln$.

\section{Our Algorithm: The General Case}
\label{sec:general}
Our algorithm for the general case runs for $\lgn-1$ iterations, starting from $i = \lgn-2$ down to $0$. We have already established \Cref{lem:mainclaim} for the base case. Now, assuming the induction hypothesis, we suppose that \Cref{lem:mainclaim} holds after the $(i+1)$-th iteration. We will now prove that it also holds for the $i$-th iteration. 

Our general-case algorithm closely resembles \Cref{alg:idea} from the overview. Specifically, Parts 1 and 2 of \Cref{alg:main} prepare estimates used in Part 3, which is nearly identical to \Cref{alg:idea}. We now examine each part in detail.

\begin{algorithm}
\caption{The general case}
\label{alg:main}

\ForEach{$i$ from $\llnt$ to 0 }
{

	{\color{blue}\tcc{Part 1}}
	\ForEach{$w$ in $A_{i+1}$}
	{
		\uf$(w,i+1)$\\
	}

	{\color{blue}\tcc{Part 2}}
	\ForEach{$w$ in $A_{i+1}$}
	{
		\uf$(w,i+1)$\\
	}

{\color{blue}\tcc{Part 3}}
	\ForEach{$x,y$ in $A_i \times A_i$}
	{

		%Make a new graph $H_w$ that contains:\\
		%(a) An edge from $w$ to all $x \in G$ with weight %(b) All edges of degree $\le \TL(2^{k+1})$ \label{line:gen5}\\
		%(c) For each $x \in G$ and for each $i \in [0,\lln]$, an edge  $[x,pivot_{i}(y)]$ with weight $|xpivot_{i}(x)|$ \label{line:gen6}\\

		\ForEach{$w$ such that $w \in ball_{i+1}(x)$ or $x = pivot_{i}(w)$}
		{
			%Add an edge between $x$ and $w$ with weight $est(w,x)$\label{line:gen8}\\

				%Add an edge from $pivot_{i+1}(x)$ to $y$ with weight $est(pivot_{i+1}(x),y)$\label{line:gen9}\\
				\label{line:gen8}
				$
				est(x,y) = min
				\begin{cases}
					est(x,w)+est(w,pivot_{i+1}(w)) + est(pivot_{i+1}(w),y)\\
					est(x,y)
				\end{cases} \label{line:gen9}
				$
		}

		%Find the shortest path from $u$ to all the
		%other vertices in $H_u$\\
		%Update $est(w,\cdot)$ accordingly.
	}

	%{\color{blue}\tcc{Part 4}}
	%\ForEach{$s \in G$}
	%{
	%	$\tri(s,i)$
	%}
}

\end{algorithm}

\begin{comment}

Our aim is to generalize the result in the previous section. We want to show:

\begin{lemma}\label{lem:induction}
	After the $i$-th iteration of our algorithm:

	\begin{enumerate}[label=(P\arabic*)]
	\item\label{item:p1} (Part 1)  $est(\ui,\vi) \le |\ai\bi|+18(\llni)$
	\item\label{item:p2} (Part 2) Let $x$ be a vertex closer to $s$ than $\ai$ on the $st$ path.   Then, after the execution of the Part 2 of our algorithm,  $est(\vi,pivot_i(x)) \le |x\pix|+|x\bi| + 18(\llni)+1$. A symmetric version of the lemma hold for $\ui$ too.
	\item\label{item:p3} (Part 3) Let $x$ be a vertex closer to $s$ than $\ai$ on the $st$ path.   Then, after the execution of the Part 3 of our algorithm,  $est(\pix, v_{i-1}) \le |x\pix|+|xb_{i-1}| + 18(\llni)+3$. A symmetric version of the lemma hold for $u_{i-1}$ too.
	%\item\label{item:p4}  (Part 4)	After Part 4 of our algorithm, either $est(s,t) \le 2|st|$ or
	%\begin{enumerate}
	%	\item $|s\ui| \ge \wtst-\al j$ and
	%	\item $|t\vi| \ge \wtst-\al j$
	%\end{enumerate}
	\end{enumerate}
\end{lemma}

We devote this whole section in proving the above lemma.
\end{comment}

\subsection{Part 1: Estimate the distance from $\vii$ to some other vertices in $A_{i+1}$}
\label{sec:part1}

In {\em Part 1} of our algorithm, we process all the vertices of $A_{i+1}$.   Thus, we process $\vii$ too. We want to get a better estimate between $\vii$ and some vertices in $A_{i+1}$.   We make the following claim:
\begin{lemma}\label{lem:basepart2}
Let $q$ be a vertex closer to $s$ than $\aii$ on the $st$ path, i.e., $|sq| \leq |s\aii|$.  Then, after  executing Part 1 of \Cref{alg:main},

$$est(\vii, pivot_{i+1}(q)) \le |q~pivot_{i+1}(q)|+|q\bii| + 18(\lgn-(i+1))+1$$
\end{lemma}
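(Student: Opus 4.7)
My plan is to trace through the call $\uf(\vii,i+1)$ that occurs in Part~1 of iteration~$i$ and exhibit a low-weight path from $\vii$ to $pivot_{i+1}(q)$ inside the constructed graph $H_{\vii}$. The natural candidate path concatenates the weighted pivot edge $[\vii,\uii]$, the unit-weight pivot edge $[\aii,\uii]$, the $\aii$-to-$q$ segment of the shortest $st$ path, and finally the weighted pivot edge $[q,pivot_{i+1}(q)]$.

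I would then verify that every piece lies in $H_{\vii}$. The edge $[\vii,\uii]$ is added in line~\ref{line:uf1} of $\uf(\vii,i+1)$ with weight $est(\vii,\uii)$. The edge $[\aii,\uii]$ is the pivot edge for $\aii$, since $\uii = pivot_{i+1}(\aii)$ by definition; it is therefore included via line~\ref{line:uf3} with weight $1$. Each edge of the $\aii$-to-$q$ subpath sits on the $s\aii$ portion of the $st$ path, so by \Cref{lem:onpath} it has degree at most $\TL(\ti{i+1})$ and is included via line~\ref{line:uf2}. The final edge $[q,pivot_{i+1}(q)]$ is included via line~\ref{line:uf3} with weight $|q~pivot_{i+1}(q)|$.

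The total weight of this path is $est(\vii,\uii) + 1 + |q\aii| + |q~pivot_{i+1}(q)|$. The inductive hypothesis, namely \Cref{lem:mainclaim} applied at iteration $i+1$, is already in force before Part~1 of iteration~$i$ runs because the outer loop counts $i$ downward from $\lgn-2$ to $0$; hence $est(\uii,\vii) \le |\aii\bii| + 18(\lgn-(i+1))$, and the symmetry convention on $est$ from the preliminaries gives $est(\vii,\uii) = est(\uii,\vii)$. Finally, because $q$, $\aii$, and $\bii$ appear in this order along the $st$ path, path additivity gives $|q\aii| + |\aii\bii| = |q\bii|$, and combining these bounds yields exactly $est(\vii,pivot_{i+1}(q)) \le |q~pivot_{i+1}(q)| + |q\bii| + 18(\lgn-(i+1)) + 1$.

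The only step that is not pure substitution is checking that the $\aii$-to-$q$ subpath truly lives inside $H_{\vii}$; this reduces to \Cref{lem:onpath}, which is the reason the degree threshold in line~\ref{line:uf2} was chosen to be $\TL(\ti{i+1})$. I do not anticipate a deeper obstacle: once path containment is established, the bound follows by one application of the induction and one use of subpath additivity.
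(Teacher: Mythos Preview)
Your proposal is correct and follows essentially the same approach as the paper: you exhibit the same four-piece path $[\vii,\uii]\odot(\uii,\aii)\odot\aii q\odot[q,pivot_{i+1}(q)]$ inside $H_{\vii}$, justify each piece via lines~\ref{line:uf1}--\ref{line:uf3} and \Cref{lem:onpath}, and then combine the induction hypothesis with subpath additivity to get the bound. The only cosmetic difference is that you spell out the symmetry of $est$ and the ordering $q,\aii,\bii$ explicitly, which the paper leaves implicit.
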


\begin{proof}

In Part 1 of our algorithm, we call the procedure $\uf(\vii,i+1)$. In this procedure, we construct a graph $H_{\vii}$.
We claim that there is a path from $\vii$ to $pivot_{i+1}(q)$ in $H_{\vii}$. The path is as follows:
$$[\vii,\uii] \odot (\uii,\aii) \odot \aii \ra q \odot [q , pivot_{i+1}(q)]$$
\begin{center}

\begin{tikzpicture}
    % Define points
    \node[draw, circle, fill=black, label=below:$s$, minimum size=4pt, inner sep=0pt] (s) at (-2,0) {};
    \node[draw, circle, fill=black, label=below:$t$, minimum size=4pt, inner sep=0pt] (t) at (9,0) {};

    % Draw path between s and t
    \draw[thick] (s) -- (t);

    % Define points where the circles touch the st path
    \node[draw, circle, fill=black, label=below:\scalebox{0.8}{$q$}, minimum size=4pt, inner sep=0pt] (x) at (0.5,0) {};
    \node[draw, circle, fill=black, label=left:\scalebox{0.8}{$pivot_{i+1}(q)$}, minimum size=4pt, inner sep=0pt] (pix) at (0.8,1) {};
    \draw[thick] (x) -- (pix);

    \node[draw, circle, fill=black, label=above:\scalebox{0.8}{$\uii$}, minimum size=4pt, inner sep=0pt] (ui1) at (2, 0.5) {};
    \node[draw, circle, fill=black, label=below:\scalebox{0.8}{$\aii$}, minimum size=4pt, inner sep=0pt] (ai1) at (2, 0) {};
    \draw[thick] (ui1) -- (ai1);

    % Add u_i and ball connections
    \node[draw, circle, fill=black, label=above:\scalebox{0.8}{$\vii$}, minimum size=4pt, inner sep=0pt] (ui) at (7, 0.5) {};
	    \node[draw, circle, fill=black, label=below:\scalebox{0.8}{$\bii$}, minimum size=4pt, inner sep=0pt] (ai) at (6.5,0) {};
    \draw[thick] (ui) -- (ai);

	    \draw[ thick] (ui) to[bend left=-10] node[midway, below] {\scalebox{0.7}{$est(\vii, \uii)$}} (ui1);
	    \draw[line width=1mm,pink,opacity=0.6] (ui) to[bend left=-10]  (ui1);
	    \draw[line width=1mm,pink,opacity=0.6] (ui1)--(ai1);
	    \draw[line width=1mm,pink,opacity=0.6] (ai1)--(x);
	    \draw[line width=1mm,pink,opacity=0.6] (x)--(pix);

\end{tikzpicture}

\end{center}
The first weighted edge is added in \Cref{line:uf1} of $\uf(\vii,i+1)$. The second and fourth edge are added in \Cref{line:uf3}. Using \Cref{lem:onpath}, each edge in $\aii q$ has degree $\TL(\ti{i+1})$ and is added in \Cref{line:uf2}.   Thus, we update the estimate between $\vii$ to $pivot_{i+1}(q)$ as follows:

\begin{align}
	est(\vii,pivot_{i+1}(q)) &\le |[\vii,\uii] \odot (\uii,\aii) \odot \aii \ra q \odot [q, \ra pivot_{i+1}(q)]|\nonumber\\
	&\le est(\vii,\uii) + |(\uii,\aii)| +|\aii q| + |q~pivot_{i+1}(q)| \nonumber\\
\intertext{Using induction hypothesis, $est(\vii,\uii) \le |\aii \bii| + 18(\lgn -(i+1))$. Thus, we get:}
	&\le \left(|\aii \bii|+18(\lgn-(i+1))\right) + 1+|\aii q| + |q~pivot_{i+1}(q)| \nonumber\\
	&= |q~pivot_{i+1}(q)| + |q\bii|+18(\lgn-(i+1))+1\nonumber
\end{align}

\end{proof}

By symmetry, we also get a better estimate between $\uii$ and some select vertices in $A_{i+1}$. We claim the following lemma.

\begin{lemma}
Let $q$ be a vertex closer to $t$ than $\bii$ on the $st$ path, i.e., $|tq| \le |t\bii|$.  Then, after  executing  Part 1 of \Cref{alg:main},

$$est(\uii, pivot_{i+1}(q)) \le |q~pivot_{i+1}(q)|+|q\aii| + 18(\lgn-(i+1))+1$$
\end{lemma}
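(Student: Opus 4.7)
The statement is the exact symmetric analog of the preceding lemma under the swap $s \leftrightarrow t$, $\uii \leftrightarrow \vii$, $\aii \leftrightarrow \bii$, so the plan is to mirror that proof step by step. The main observation is that in Part~1 of \Cref{alg:main}, the procedure $\uf(\uii, i+1)$ is executed (since $\uii \in A_{i+1}$), producing the auxiliary graph $H_{\uii}$. This graph contains (i) a weighted edge from $\uii$ to every $x \in G$ of weight $est(\uii,x)$ (\Cref{line:uf1}); (ii) every edge of $G$ of degree $\TL(\ti{i+1})$ (\Cref{line:uf2}); and (iii) for every $x$ and every $j \in [0,\lln]$, a pivot edge $[x, pivot_j(x)]$ of weight $|x~pivot_j(x)|$ (\Cref{line:uf3}).

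Next I will exhibit the explicit path
$$[\uii,\vii] \odot (\vii,\bii) \odot \bii \ra q \odot [q, pivot_{i+1}(q)]$$
from $\uii$ to $pivot_{i+1}(q)$ in $H_{\uii}$. The first (weighted) edge comes from Line~\ref{line:uf1}; the edges $(\vii,\bii)$ and $[q, pivot_{i+1}(q)]$ come from Line~\ref{line:uf3}. The subpath $\bii \ra q$ lies entirely on the $t\bii$ path of $G$: by the hypothesis $|tq|\le |t\bii|$ the vertices appear in the order $s, \dots, \aii, \dots, \bii, \dots, q, \dots, t$ along the $st$ path, so $\bii q$ is a subpath of $t\bii$. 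Then \Cref{lem:onpath} gives that every edge of $\bii q$ has degree $\TL(\ti{i+1})$, hence is present via Line~\ref{line:uf2}.

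Finally, since Dijkstra on $H_{\uii}$ updates $est(\uii, pivot_{i+1}(q))$ to at most the weight of any $\uii$-to-$pivot_{i+1}(q)$ path, I bound the weight of the displayed path by
$est(\uii,\vii) + 1 + |\bii q| + |q~pivot_{i+1}(q)|$,
apply the induction hypothesis $est(\uii,\vii) \le |\aii\bii| + 18(\lgn-(i+1))$, and collapse $|\aii\bii| + |\bii q| = |\aii q| = |q\aii|$ using the ordering above. This gives exactly the claimed bound $|q~pivot_{i+1}(q)| + |q\aii| + 18(\lgn-(i+1)) + 1$.

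There is no genuine obstacle here beyond routine bookkeeping; the only points that need a little care are (a) verifying the order of vertices on the $st$ path so that $\bii q$ is indeed a subpath of $t\bii$ (this is where the hypothesis $|tq| \le |t\bii|$ is used), and (b) matching the additive constants so that the final bound comes out to $+1$ rather than $+2$, which is why I must account for the edge $(\vii,\bii)$ contributing exactly $1$ while the pivot edges contribute distances, not constants.
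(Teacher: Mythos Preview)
Your proposal is correct and follows exactly the approach the paper intends: the paper simply states this lemma ``by symmetry'' with \Cref{lem:basepart2}, and you have correctly unwound that symmetry by running $\uf(\uii,i+1)$, exhibiting the path $[\uii,\vii] \odot (\vii,\bii) \odot \bii q \odot [q,pivot_{i+1}(q)]$ in $H_{\uii}$, and invoking the induction hypothesis on $est(\uii,\vii)$.
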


\subsection{Part 2: Estimate the distance from vertices of $A_{i+1}$ and $\vi$}

In Part 2 of our algorithm, we will give a new estimate between some vertices of $A_{i+1}$  and $\vi$. Similar to  the previous section, we will prove the following statement.

\begin{lemma} \label{lem:basepart3}
Let $q$ be a vertex closer to $s$ than $\aii$ on the $st$ path, i.e., $|sq| \leq |s\aii|$.  Then, after  executing Part 2 of \Cref{alg:main},

$$est(pivot_{i+1}(q),\vi) \le |q~pivot_{i+1}(q)|+|q\bi| + 18(\lgn-(i+1))+3$$
\end{lemma}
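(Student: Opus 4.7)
The plan is to mirror the structure used in Part 2 of \es\ (Section~\ref{sec:es}) and the base-case argument, while chaining the new Part 1 estimate into the construction. Specifically, in Part 2 of \Cref{alg:main} we call $\uf(pivot_{i+1}(q), i+1)$, which builds the graph $H_{pivot_{i+1}(q)}$ with the three ingredients listed in \uf. My goal is to exhibit a short path from $pivot_{i+1}(q)$ to $\vi$ inside $H_{pivot_{i+1}(q)}$ whose total weight is bounded by the target quantity.

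The candidate path I will use is
\[
[pivot_{i+1}(q), \vii] \odot (\vii, \bii) \odot \bii\ra \bi \odot [\bi, \vi].
\]
The first (weighted) edge lies in $H_{pivot_{i+1}(q)}$ by \Cref{line:uf1} of \uf, and its weight is $est(pivot_{i+1}(q), \vii)$, which has already been bounded in Part~1 by \Cref{lem:basepart2} applied to the same $q$. The second and fourth edges are pivot edges added by \Cref{line:uf3} (note that \Cref{line:uf3} of \uf\ adds pivot edges for \emph{all} levels in $[0,\lln]$, so $[\bi, \vi] = [\bi, pivot_i(\bi)]$ is included in $H_{pivot_{i+1}(q)}$ with weight $1$). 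For the third subpath, since $A_{i+1}\subseteq A_i$, the vertex $\bi$ is closer to $t$ than $\bii$, so $\bii\ra\bi$ is a subpath of $t\bii$; by \Cref{lem:onpath} every edge there has degree $\le \TL(\ti{i+1})$ and is therefore included by \Cref{line:uf2} of \uf.

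Given that the path lies in $H_{pivot_{i+1}(q)}$, the shortest-path computation in \uf\ will update
\[
est(pivot_{i+1}(q), \vi) \le est(pivot_{i+1}(q), \vii) + 1 + |\bii\bi| + 1.
\]
Substituting the Part~1 bound $est(pivot_{i+1}(q), \vii) \le |q~pivot_{i+1}(q)| + |q\bii| + 18(\lgn-(i+1)) + 1$ and using the fact that $q$ lies to the $s$-side of $\aii$, which is itself to the $s$-side of $\bii$, which is to the $s$-side of $\bi$ on the $st$ path, we get $|q\bii| + |\bii\bi| = |q\bi|$, yielding the desired bound $|q~pivot_{i+1}(q)| + |q\bi| + 18(\lgn-(i+1)) + 3$.

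The main obstacle is making sure each edge of the claimed path actually belongs to $H_{pivot_{i+1}(q)}$; in particular, verifying that the degree bound from \Cref{lem:onpath} applies at level $i+1$ to the entire $\bii\ra\bi$ subpath, and that the pivot edge $[\bi,\vi]$ at level $i$ is produced by \uf\ when it is called at level $i+1$. Both of these are direct from the definitions, so once the path is written down, the computation collapses to concatenating lengths and invoking the already-proved Part~1 lemma.
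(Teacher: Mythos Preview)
Your proposal is correct and essentially identical to the paper's own proof: the paper exhibits exactly the same path $[pivot_{i+1}(q),\vii]\odot(\vii,\bii)\odot\bii\bi\odot(\bi,\vi)$ in $H_{pivot_{i+1}(q)}$, justifies each piece via \Cref{line:uf1}, \Cref{line:uf3}, and \Cref{line:uf2} (with \Cref{lem:onpath}) respectively, and then substitutes the Part~1 bound from \Cref{lem:basepart2} to obtain the $+3$. Your explicit remark that \Cref{line:uf3} inserts pivot edges at \emph{all} levels, so that $[\bi,\vi]$ is present even though \uf\ is invoked with parameter $i+1$, is a detail the paper leaves implicit.
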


\begin{proof}

In Part 2 of our algorithm, we call the procedure $\uf$ all vertices of $A_{i+1}$. So, we also call the procedure $\uf(pivot_{i+1}(q),i+1)$. In this procedure, we construct a graph $H_{pivot_{i+1}(q)}$. We claim that there is a path from $pivot_{i+1}(q)$ to $\vi$ in $H_{pivot_{i+1}(q)}$ as follows:

$$[pivot_{i+1}(q),\vii] \odot (\vii,\bii) \odot \bii \ra \bi \odot (\bi,\vi)$$
The first weighted edge is added in \Cref{line:uf1} of $\uf(pivot_{i+1}(q),i+1)$. The second and fourth edge are added in \Cref{line:uf3}.  Using \Cref{lem:onpath}, each edge in $\bii \bi$ has degree $\TL(\ti{i+1})$ and is added in \Cref{line:uf2}.   Thus, we update the estimate between  $pivot_{i+1}(q)$ and $\vi$ as follows:

\begin{center}

\begin{tikzpicture}
    % Define points
    \node[draw, circle, fill=black, label=below:$s$, minimum size=4pt, inner sep=0pt] (s) at (-2,0) {};
    \node[draw, circle, fill=black, label=below:$t$, minimum size=4pt, inner sep=0pt] (t) at (9,0) {};

    % Draw path between s and t
    \draw[thick] (s) -- (t);

    % Define points where the circles touch the st path
    \node[draw, circle, fill=black, label=below:\scalebox{0.8}{$q$}, minimum size=4pt, inner sep=0pt] (x) at (0.5,0) {};
    \node[draw, circle, fill=black, label=left:\scalebox{0.8}{$pivot_{i+1}(q)$}, minimum size=4pt, inner sep=0pt] (pix) at (0.8,1) {};
    \draw[thick] (x) -- (pix);

    \node[draw, circle, fill=black, label=above:\scalebox{0.8}{$\vi$}, minimum size=4pt, inner sep=0pt] (ui1) at (8, 0.5) {};
    \node[draw, circle, fill=black, label=below:\scalebox{0.8}{$\bi$}, minimum size=4pt, inner sep=0pt] (ai1) at (8, 0) {};
    \draw[thick] (ui1) -- (ai1);

    % Add u_i and ball connections
    \node[draw, circle, fill=black, label=above:\scalebox{0.8}{$\vii$}, minimum size=4pt, inner sep=0pt] (ui) at (6, 0.5) {};
	    \node[draw, circle, fill=black, label=below:\scalebox{0.8}{$\bii$}, minimum size=4pt, inner sep=0pt] (ai) at (5.5,0) {};
    \draw[thick] (ui) -- (ai);

	    \draw[ thick] (ui) to[bend left=-10] node[midway, above, sloped] {\scalebox{0.7}{$est(pivot_{i+1}(q),\vii$)}} (pix);
	    \draw[line width=1mm,pink,opacity=0.6] (ui) to[bend left=-10]  (pix);
	\draw[line width=1mm,pink,opacity=0.6] (ui1)--(ai1);
	\draw[line width=1mm,pink,opacity=0.6] (ai1)--(ai);
	\draw[line width=1mm,pink,opacity=0.6] (ui)--(ai);

\end{tikzpicture}

\end{center}

\begin{flalign}
    est(pivot_{i+1}(q),\vi)&\le |pivot_{i+1}(q),\vii] \odot (\vii,\bii) \odot \bii \ra \bi \odot (\bi,\vi)| && \notag \\
    &\leq est(pivot_{i+1}(q),\vii) + |(\vii,\bii)| + |\bii \bi| + |(\bi,\vi)|&& \notag \\
    \intertext{Using \Cref{lem:basepart2}, $est(\vii, pivot_{i+1}(q)) \le |q~pivot_{i+1}(q)| + |q\bii| + 18(\lgn-(i+1))+1$. Thus, we get:}
    &\leq (|q~pivot_{i+1}(q)| + |q\bii| + 18(\lgn-(i+1))+1) + 1 + |\bii \bi| +1 && \notag \\
    &= |q~pivot_{i+1}(q)| + |q\bi| + 18(\lgn-(i+1))+3 && \notag
\end{flalign}

\end{proof}

By symmetry, we also get a better estimate between some select vertices in $A_{i+1}$ and $\ui$. We claim the following lemma.

\begin{lemma}
Let $q$ be a vertex closer to $t$ than $\bii$ on the $st$ path, i.e., $|tq| \le |t\bii|$.  Then, after  executing  Part 2 of \Cref{alg:main},

$$est(pivot_{i+1}(q),\ui) \le |q~pivot_{i+1}(q)|+|q\ai| + 18(\lgn-(i+1))+3$$
\end{lemma}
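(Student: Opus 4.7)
This lemma is the mirror image of \Cref{lem:basepart3}, obtained by swapping the roles of $s \leftrightarrow t$, $\ai \leftrightarrow \bi$, and $\ui \leftrightarrow \vi$. My plan is to follow exactly the structure of the proof of \Cref{lem:basepart3}, applying the symmetric version of \Cref{lem:basepart2} (which bounds $est(\uii, pivot_{i+1}(q))$ for $q$ on the $t$ side of $\bii$) as the induction-style input from Part 1.

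Concretely, since Part 2 of \Cref{alg:main} invokes $\uf(w, i+1)$ for every $w \in A_{i+1}$, and $pivot_{i+1}(q) \in A_{i+1}$, the call $\uf(pivot_{i+1}(q), i+1)$ will be executed. This builds the auxiliary graph $H_{pivot_{i+1}(q)}$, and my goal is to exhibit the following path from $pivot_{i+1}(q)$ to $\ui$ inside it:
\[
[pivot_{i+1}(q),\uii] \odot (\uii,\aii) \odot \aii{\ra}\ai \odot (\ai,\ui).
\]
The first edge is the weighted edge added in \Cref{line:uf1} with weight $est(pivot_{i+1}(q),\uii)$, which the symmetric version of \Cref{lem:basepart2} (established at the end of Part 1) bounds by $|q~pivot_{i+1}(q)| + |q\aii| + 18(\lgn-(i+1)) + 1$. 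The second and fourth edges $(\uii,\aii)$ and $(\ai,\ui)$ are the pivot edges inserted in \Cref{line:uf3}. For the $\aii\ai$ subpath, I will invoke \Cref{lem:onpath} (on the $s$-side), which guarantees that every edge of $s\ai$, and in particular every edge of $\aii\ai$, has degree at most $\TL(\ti{i+1})$, hence is included by \Cref{line:uf2}.

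Summing the edge weights and applying the symmetric Part-1 bound gives
\[
est(pivot_{i+1}(q),\ui) \le \bigl(|q~pivot_{i+1}(q)| + |q\aii| + 18(\lgn-(i+1))+1\bigr) + 1 + |\aii\ai| + 1.
\]
The final collapse uses that $q$ lies on the $t$-side of $\bii$ (hypothesis $|tq| \le |t\bii|$), while $\aii$ lies on the $s$-side of $\bii$ with $|s\ai|\le |s\aii|$; consequently, along the $st$ path the vertices appear in the order $s,\ai,\aii,\bii,\dots,q,\dots,t$, so $|q\aii| + |\aii\ai| = |q\ai|$. Substituting yields the claimed bound $|q~pivot_{i+1}(q)| + |q\ai| + 18(\lgn-(i+1)) + 3$.

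The only step that requires any care is the verification that the $\aii\ai$ subpath is indeed contained in $H_{pivot_{i+1}(q)}$, i.e., that every edge on it has degree $\TL(\ti{i+1})$; this is the only place the proof depends on the specific index $i+1$ rather than being purely syntactic relabelling. Since \Cref{lem:onpath} gives this for free, I expect no genuine obstacle, and the proof will be a verbatim mirror of the previous one.
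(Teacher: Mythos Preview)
Your approach is correct and is exactly the symmetric argument the paper intends (the paper itself states this lemma without proof, simply saying ``by symmetry''). One small slip in your justification: the subpath $\aii\ai$ is \emph{not} contained in $s\ai$, since $\ai$ lies between $s$ and $\aii$ on the $st$ path; rather, $\aii\ai$ is a subpath of $s\aii$, and you should invoke \Cref{lem:onpath} at index $i+1$ (giving that every edge of $s\aii$ has degree $\le \TL(\ti{i+1})$) to conclude that $\aii\ai$ is present in $H_{pivot_{i+1}(q)}$. With that correction, everything else---the path exhibited, the use of the symmetric Part-1 bound on $est(\uii,pivot_{i+1}(q))$, and the final collapse $|q\aii|+|\aii\ai|=|q\ai|$ (which holds because $|s\ai|\le|s\aii|\le|s\bii|\le|sq|$, so the order along $st$ is $\ai,\aii,q$)---goes through verbatim.
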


\subsection{Part 3: Estimate the distance between $\ui$ and $\vi$ }

 In Part 3 of our algorithm in the $i$-th iteration, we look at all pairs $(x,y)$ in  $A_i \times A_i$.
For each vertex \( w \) where either \( w \in \ball_{i+1}(x) \) or \( x = pivot_i(w) \), we estimate the distance between $x$ and $y$ using $w$ and $pivot_{i+1}(w)$. Let us first bound the running time of our algorithm. The running time of Part 1 and Part 2 (same as Part 1 and Part 2 of $\es$) is $\TL(n^2)$. The running time of Part 3 depends on number of possible vertices $w$ processed in \Cref{line:gen8} of \Cref{alg:main}. There are two cases:
\begin{enumerate}[noitemsep]
	\item $w \in ball_{i+1}(x)$:
	Using \Cref{lem:ballsize}, there are at most $\TL(2^{2^{i+1}})$ vertices in $\ball_{i+1}(x)$.  
	\item $x = pivot_i(w)$: Let us assume that there are $\ell_x$ vertices for which $x = pivot_i(\cdot)$. 
\end{enumerate}
Thus, the total number of vertices processed in \Cref{line:gen8} of \Cref{alg:main} is  $\TL\left(\ti{i+1} + \ell_
x \right)$. We can calculate the running time of  Part 3 of our algorithm as:
 \begin{align}
 	   \sum_{(x,y) \in A_i \times A_i} \TL\left( \ti{i+1} + \ell_x \right) &=\TL\left( \frac{n}{\ti{i}}\frac{n}{\ti{i}} \ti{i+1} + \sum_{(x,y) \in A_i \times A_i}\ell_x\right)\notag\\
 	  &=\TL\left( n^2 + \frac{n}{\ti{i}}\sum_{x \in A_i}\ell_x\right)\notag\\
 	  \intertext{Using the fact that each vertex has a unique $i$-th pivot, we have $\sum_{x \in A_i} \ell_x = n$. Thus, we get}
 	  &= \TL\left( n^2+ \frac{n^2}{\ti{i}}\right)\notag\\
 	  &=\TL(n^2)\notag
 \end{align}

Thus, the running time of all parts of our algorithm in the $i$-th iteration is $\TL(n^2)$. Since the algorithm runs for $\lgn$ iterations, the total running time of \Cref{alg:main} is $\TL(n^2)$. Let us now prove the statement in \Cref{lem:mainclaim} after the $i$-th iteration of our algorithm. We have alluded to the proof in the overview in \Cref{sec:overview}.  Assume $|s\aii| \le |t\bii|$. Otherwise, a symmetrical argument holds from the $t$ side. There are two cases:

	\begin{enumerate}
		\item $|\ai\aii| \le 5$

		 We can use \Cref{lem:basepart3} with $q = \ai$. Indeed, this is valid as $\ai$ lies closer to $s$ than $\aii$. Thus, we get:
		\begin{align}
			est(\vi,pivot_{i+1}(\ai)) &\le |\ai pivot_{i+1}(\ai)|+|\ai\bi| + 18(\lgn-(i+1))+3 \label{eq:9}
		\end{align}

		In Part 3 of our algorithm, we process all pairs in $A_i$. Thus, we process $(\ui,\vi)$ too. By definition, $\ui = pivot_i(a_i)$.  Thus, we would have estimated the distance between $\ui$ and $\vi$ using $x = \ui$, $y=\vi$ and $w = \ai$ in  \Cref{alg:main} (using the relation $x=pivot_i(w)$). This estimation represents the following path in the graph:
		$$(\ui,\ai) \odot [\ai,pivot_{i+1}(\ai)]\odot [pivot_{i+1}(\ai), \vi]$$ 
		Please see the figure below for a pictoral view of the path.
				\begin{center}
		\begin{tikzpicture}
    	% Define points
    	\node[draw, circle, fill=black, label=below:$s$, minimum size=4pt, inner sep=0pt] (s) at (-2,0) {};
    	\node[draw, circle, fill=black, label=below:$t$, minimum size=4pt, inner sep=0pt] (t) at (9,0) {};

    	% Draw path between s and t
    	\draw[thick] (s) -- (t);

    	% Define points where the circles touch the st path
    	\node[draw, circle, fill=black, label=below left:\scalebox{0.8}{$\ai$}, minimum size=4pt, inner sep=0pt] (ai) at (0.5,0) {};
    	\node[draw, circle, fill=black, label=above:\scalebox{0.8}{$\aii$}, minimum size=4pt, inner sep=0pt] (aii) at (2,0) {};
    	\node[draw, circle, fill=black, label=left:\scalebox{0.8}{$\ui$}, minimum size=4pt, inner sep=0pt] (ui) at (0,0.6) {};
    	\node[draw, circle, fill=black, label=right:\scalebox{0.8}{$\uii$}, minimum size=4pt, inner sep=0pt] (uii) at (2,-0.7) {};
    	\node[draw, circle, fill=black, label=left:\scalebox{0.8}{$pivot_{i+1}(\ai)$}, minimum size=4pt, inner sep=0pt] (ais) at (1,1.2) {};
    	\draw[thick] (ai) -- (ais);
		\draw[thick] (ai) -- (ui);
		\draw[thick] (aii) -- (uii);

    	\node[draw, circle, fill=black, label=below:\scalebox{0.8}{$\bi$}, minimum size=4pt, inner sep=0pt] (bi) at (5.5,0) {};
    	\node[draw, circle, fill=black, label=right:\scalebox{0.8}{$\vi$}, minimum size=4pt, inner sep=0pt] (vi) at (5.8,0.6) {};

		\draw[thick] (bi) -- (vi);
		\draw[thick] (ais) to[bend left=10] node[midway, above, sloped] {\scalebox{0.7}{$est(pivot_{i+1}(\ai),\vi$)}}(vi);
    	\draw[line width=1mm,pink,opacity=0.6] (ui)--(ai);
		\draw[line width=1mm,pink,opacity=0.6] (ais) to[bend left=10] (vi);
		\draw[line width=1mm,pink,opacity=0.6] (ai)--(ais);
		\draw [decorate, decoration={brace, amplitude=6pt, mirror}, thick] (ai) -- (aii) node[midway, below=6pt] {\scalebox{0.7}{$\le 5$}};
	\end{tikzpicture}

	\end{center}
		Thus, $est(\ui,\vi)$ will be updated as:

		\begin{align*}
			est(\ui,\vi) &\le est(\ui,\ai) + est(\ai,pivot_{i+1}(\ai))+ est(pivot_{i+1}(\ai), \vi)  \\
			\intertext{Using \Cref{lem:pivotandball}, we know the exact distance between a vertex and its pivots. Thus, we get:}
			&\le |(\ui,\ai)| + |\ai pivot_{i+1}(\ai)| + est(pivot_{i+1}(\ai), \vi)\\
			&= 1+ |\ai pivot_{i+1}(\ai)| + est(pivot_{i+1}(\ai),\vi) \\
			\intertext{Using \Cref{eq:9}, we get:}
			&\le1+|\ai pivot_{i+1}(\ai)|+(|\ai pivot_{i+1}(\ai)|+|\ai \bi| + 18(\log\log n-(i+1))+3)\\
			&\le2|\ai pivot_{i+1}(\ai)|+ |\ai\bi| + 18(\lgn-(i+1))+4\\
\intertext{Since $\uii$ is one of the candidate for the $i+1$-th pivot of $\ai$, $|a_i pivot_{i+1}(\ai)| \le |\ai\uii| \le |\ai\aii| + |(\aii,\uii)|$.
		Since $|\ai\aii| \le 5$, $|\ai pivot_{i+1}(\ai)| \le 6$. Thus, we get:}
			&\le 12+ |\ai\bi| + 18(\lgn-(i+1))+4\\
			& = |\ai\bi| + 18(\lgn-(i+1)) +16\\
			&\le  |\ai\bi| + 18(\llni)
		\end{align*}

	\item $|\ai\aii| \ge 6$

	For this case, we use our arguments from the overview  in \Cref{sec:overview}. Remember that  $|s\aii| \le |t\bii|$.  Let $p$ be the vertex at a distance $|s~\piis|$ from $s$ on the $st$ path. Using \Cref{lem:near}, since  
$|s~pivot_{i+1}(s)| - |s\aii| \le 3$, the distance between $p$ and $\aii$ on the $st$ path is at most 3. Using \Cref{lem:es}, there is a vertex, say $z$, at a distance 3 from $p$ or at a distance $\le 6$ from $\aii$ on $s\aii$ path such that  $z \in \ball_{i+1}(\ui)$. We now show that we can use \Cref{lem:basepart3} with $q = z$. This is true as:
	
	\begin{align*}
		|sz| & = |sp|-3\\
			 & =|s~\piis|-3\\
		\intertext{After \Cref{lem:near}, we have already argued that $s~\piis$ can be greater than $s\aii$ by at most 1. Thus, we get:}
			& \le |s\aii|+1-3\\
			& \le |s\aii|
	\end{align*}
	
Thus, $z$ is closer to $s$ than $\aii$ and we can use \Cref{lem:basepart3} with $q = z$. Thus, we get:

		\begin{align}
			est(pivot_{i+1}(z),\vi) 
			&\le |z~pivot_i(z)|+|z \bi| + 18(\lgn- (i+1))+3\label{eq:10}
		\end{align}
In Part 3 of our algorithm, we process the pair $(\ui,\vi)$. Using \Cref{lem:es}, $z \in \ball_{i+1}(\ui)$.  Thus, we would have estimated the distance between $\ui$ and $\vi$ using $x = \ui$, $y=\vi$ and $w = z$ in  \Cref{alg:main} (using the relation $w = \ball_{i+1}(x)$). This estimation represents the following path in the graph:

		$$[\ui,z]\ \odot [z,pivot_{i+1}(z)]\odot [pivot_{i+1}(z), \vi]$$ 
		Please see the figure below for a pictoral view of the path.

	\begin{center}
	\begin{tikzpicture}
    % Define points
    \node[draw, circle, fill=black, label=below:$s$, minimum size=4pt, inner sep=0pt] (s) at (-2,0) {};
    \node[draw, circle, fill=black, label=below:$t$, minimum size=4pt, inner sep=0pt] (t) at (9,0) {};

    % Draw path between s and t
    \draw[thick] (s) -- (t);

    % Ball region for \ui
    \node[draw, circle, fill=black, label=above:\scalebox{0.8}{$\ui$}, minimum size=4pt, inner sep=0pt] (ui) at (1, 0.5) {};
    \node[draw, circle, fill=black, label=below:\scalebox{0.8}{$\ai$}, minimum size=4pt, inner sep=0pt] (ai) at (0.8, 0) {};
    \draw[thick] (ui) -- (ai);
	\node[draw, circle, fill=black, label=above:\scalebox{0.8}{\scalebox{0.8}{$\aii$}}, minimum size=4pt, inner sep=0pt] (aii) at (2.7,0) {};
 	\node[draw, circle, fill=black, label=right:\scalebox{0.8}{\scalebox{0.8}{$\uii$}}, minimum size=4pt, inner sep=0pt] (uii) at (2.7,-0.7) {};

    % Draw the dotted circle around u_{i-1}
    \draw[thick, dotted] (ui) circle[radius=0.7];

    % Label for ball_i(u_{i-1}) with arrow pointing to the circle
    \node at (-0.9, 0.9) (label) {\scalebox{0.8}{$\ball_{i+1}(\ui)$}};
    \draw[->] (label.east) -- (0.3, 0.9);

    % Add u_i and ball connections
    \node[draw, circle, fill=black, label=above:\scalebox{0.8}{$\vi$}, minimum size=4pt, inner sep=0pt] (vi) at (6.3, 0.5) {};
	\node[draw, circle, fill=black, label=below:\scalebox{0.8}{$\bi$}, minimum size=4pt, inner sep=0pt] (bi) at (6,0) {};
    \draw[thick] (vi) -- (bi);

    % Add y node
    \node[draw, circle, fill=black, label=below:\scalebox{0.8}{$z$}, minimum size=4pt, inner sep=0pt] (z) at (1.5, 0) {};
    \node[draw, circle, fill=black, label=above :\scalebox{0.65}{$pivot_{i+1}(z)$}, minimum size=4pt, inner sep=0pt] (zis) at (2, 1) {};
    \draw[thick] (y) -- (zis);
	\draw[thick] (ui) -- (z);
	\draw[thick] (aii) -- (uii);

	\draw[thick] (zis) to[bend left=10] node[midway, above, sloped] {\scalebox{0.7}{$est(pivot_{i+1}(z),\vi$)}}(vi);

    \draw[line width=1mm,pink,opacity=0.6] (z)--(ui);
    \draw[line width=1mm,pink,opacity=0.6] (z)--(zis);
	\draw[line width=1mm,pink,opacity=0.6] (zis) to[bend left=10] (vi);
	\draw [decorate, decoration={brace, amplitude=6pt, mirror}, thick] (z) -- (aii) node[midway, below=6pt] {\scalebox{0.7}{$\le 6$}};

	\end{tikzpicture}

	\end{center}
		Thus, $est(\ui,\vi)$ will be updated as:

		\begin{align}
			est(\ui,\vi) &\le est(\ui,z) + est(z,pivot_{i+1}(z))+ est(pivot_{i+1}(z), \vi)  \notag\\
			\intertext{ Using \Cref{lem:pivotandball}, we know the exact distance between a vertex and its pivots, as well as between the vertex and all other vertices in its $(i+1)$-th ball. Thus, we get:  
}
			&= |\ui z|+ |z~pivot_{i+1}(z)|+ est(pivot_{i+1}(z),\vi) \notag\\
			\intertext{Using  \Cref{eq:10}, we get:}
			&= |\ui z|+ |z~pivot_{i+1}(z)|+ (|z~pivot_{i+1}(z)| + |z \bi| + 18(\lgn- (i+1))+3) \notag\\
			&= |\ui z|+ 2|z~pivot_{i+1}(z)|  + |z \bi| + 18(\lgn- (i+1))+3 \notag\\
			\intertext{But $z$ is at a distance $\le 6$ from $\aii$. Using triangle inequality,  $|z \uii| \le |z\aii| + |(\aii,\uii)| \le  7$. This implies that $|z pivot_{i+1}(z)| \le 7$ as $\uii$ itself is one candidate for the $i+1$-th pivot of $z$. Thus, we have: }
			&\le  |\ui z|+14+|z \bi| + 18(\lgn- (i+1))+3 \notag\\
			&= |\ui z|+ |z \bi| + 18(\lgn- (i+1))+17 \notag\\
			\intertext{Using triangle inequality for $|\ui z|$, we get:}\notag
			&\le  (|(\ui,\ai)|+ |\ai z|)+ (|z\bi|+18(\lgn-(i+1))+ 17 \notag\\
			&\le  (1+ |\ai z|)+ (|z\bi|+18(\lgn-(i+1))+ 17 \notag\\
			&=|\ai \bi| + 18(\lgn-(i+1))+18\notag \\
			& = |\ai\bi| + 18(\llni) \notag
		\end{align}
	\end{enumerate}
	
	This completes the proof of \Cref{lem:mainclaim}. The reader can verify that all our algorithms, \Cref{alg:base}, \Cref{alg:main}, and $\es$, run in $\TL(n^2)$ time. Thus, we have proven \Cref{thm:main} when $k = \log n$.

\section{A general bound}
\label{sec:genk}

In this section, we prove \Cref{thm:main}. To achieve this, we slightly modify the algorithm from the previous section. Specifically,  
we run \Cref{alg:main} from iteration $\lgn-2$ to $i+1$ and then stop. We determine the exact value of $i$ later. After that, we run the following algorithm:

\begin{algorithm}[H]
\caption{Final Step in the algorithm for general bound}
\label{alg:genk}

	{\color{blue}\tcc{Part 1}}
	\ForEach{$w$ in $A_{i+1}$}
	{
		\uf$(w,i+1)$\\
	}

	{\color{blue}\tcc{Part 2}}
	\ForEach{$w$ in $A_{i+1}$}
	{
		\uf$(w,i+1)$\\
	}

{\color{blue}\tcc{Part 3}}
	\ForEach{$x,y$ in $A_0 \times A_0$}
	{
		\ForEach{ $w \in ball_{i+1}(x) \cup \{x\}$}
		{\label{line:genk8}

				$
				est(x,y) = min
				\begin{cases}
					est(x,w)+est(w,pivot_{i+1}(w)) + est(pivot_{i+1}(w),y)\\
					est(x,y)
				\end{cases} \label{line:gen9}
				$
		}

	}

\end{algorithm}

The above algorithm is almost identical to \Cref{alg:main}, except for Part 3, where we operate on $A_0 \times A_0$ instead of $A_{i} \times A_{i}$.  Next, we review each part of the algorithm. As in \Cref{sec:part1}, Part 1 provides an estimate between some vertices of $A_{i+1}$ and $\vii$. Now, we examine Part 2.

\subsection{Part 2: Estimate the distance between some vertices in $A_{i+1}$ and $t$}
In Part 2 of our algorithm, we will give a new estimate between   some vertices in $A_{i+1}$   and $t$. To this end, we prove the following lemma. 

\begin{lemma} \label{lem:genpart2}
Let $q$ be a vertex closer to $s$ than $\aii$ on the $st$ path, i.e., $|sq| \leq |s\aii|$.  Then, after  executing Part 2 of \Cref{alg:genk} ,

$$est(pivot_{i+1}(q),t) \le |q~pivot_{i+1}(q)|+|qt| + 18(\lgn-(i+1))+2$$
\end{lemma}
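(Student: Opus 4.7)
The plan is to mirror the proof of \Cref{lem:basepart3}, but instead of routing the estimating path from $pivot_{i+1}(q)$ to $\vi$ via the subpath $\bii\bi$, I will route it all the way to $t$ via the subpath $\bii t$. So in Part~2 of \Cref{alg:genk}, when the algorithm calls $\uf(pivot_{i+1}(q),i+1)$, I will exhibit the following path in $H_{pivot_{i+1}(q)}$:
$$[pivot_{i+1}(q),\vii] \odot (\vii,\bii) \odot \bii \ra t.$$

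The key step is justifying that this path lies in $H_{pivot_{i+1}(q)}$. The first (weighted) edge is inserted in \Cref{line:uf1} of \uf, and its weight is $est(pivot_{i+1}(q),\vii)$; by Part~1 (\Cref{lem:basepart2}, applied after executing Part~1 of \Cref{alg:genk}), this estimate is at most $|q~pivot_{i+1}(q)| + |q\bii| + 18(\lgn-(i+1)) + 1$. The edge $(\vii,\bii)$ is present via \Cref{line:uf3} because $\vii = pivot_{i+1}(\bii)$. Finally, since \Cref{lem:onpath} guarantees every edge of the $t\bii$ subpath has degree $\TL(\ti{i+1})$, each such edge is included by \Cref{line:uf2}, so the subpath $\bii \ra t$ lies in $H_{pivot_{i+1}(q)}$.

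Once the path is shown to be in the graph, a short computation finishes the proof:
\begin{align*}
est(pivot_{i+1}(q),t) &\le est(pivot_{i+1}(q),\vii) + |(\vii,\bii)| + |\bii t| \\
&\le \left(|q~pivot_{i+1}(q)| + |q\bii| + 18(\lgn-(i+1)) + 1\right) + 1 + |\bii t| \\
&= |q~pivot_{i+1}(q)| + |q\bii| + |\bii t| + 18(\lgn-(i+1)) + 2.
\end{align*}
To collapse $|q\bii| + |\bii t|$ into $|qt|$, I use that $q$ is closer to $s$ than $\aii$ on the $st$ path, and $\aii$ is closer to $s$ than $\bii$ (by the $A_{i+1} \subseteq A_i$ argument preceding \Cref{lem:onpath}); hence $\bii$ lies on the $qt$ subpath of $st$, giving $|q\bii|+|\bii t| = |qt|$. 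Substituting yields the claimed bound.

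I do not expect a serious obstacle here: the only subtlety is ensuring that $\bii$ really lies between $q$ and $t$ on the $st$ path so the additivity $|q\bii|+|\bii t| = |qt|$ holds, but this follows immediately from the hypothesis $|sq| \le |s\aii|$ combined with $|s\aii| \le |s\bii|$ (which one gets from $\aii$ appearing before $\bii$ along $st$, as both are defined with respect to the nested samples $A_{i+1}$). The proof is essentially a one-edge shortcut of the argument used in \Cref{lem:basepart3}, saving one additive unit by not paying for the extra hop $(\bi,\vi)$.
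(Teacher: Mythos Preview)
Your proposal is correct and follows essentially the same argument as the paper: the same path $[pivot_{i+1}(q),\vii]\odot(\vii,\bii)\odot\bii t$ in $H_{pivot_{i+1}(q)}$, the same edge-by-edge justification, and the same chain of inequalities invoking \Cref{lem:basepart2}. The only difference is that you spell out why $|q\bii|+|\bii t|=|qt|$, which the paper leaves implicit; your reasoning there is fine (though the cleanest justification is simply that $\bii$ is itself a candidate for $\aii$, so $|s\aii|\le|s\bii|$ by definition of $\aii$).
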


\begin{proof}

In Part 2 of \Cref{alg:genk}, we call the procedure $\uf$ all vertices of $A_{i+1}$. So, we also call the procedure $\uf(pivot_{i+1}(q),i+1)$. In this procedure, we construct a graph $H_{pivot_{i+1}(q)}$. We claim that there is a path from $pivot_{i+1}(q)$ to $t$ in $H_{pivot_{i+1}(q)}$ as follows:

$$[pivot_{i+1}(q),\vii] \odot (\vii,\bii) \odot \bii \ra t $$
The first weighted edge is added in \Cref{line:uf1} of $\uf(pivot_{i+1}(q),i+1)$. The second  edge is added in \Cref{line:uf3}. Using \Cref{lem:onpath}, each edge in $\bii t$ has degree $\TL(\ti{i+1})$ and is added in \Cref{line:uf2}.    Thus, we update the estimate between  $pivot_{i+1}(q)$ and $t$ as follows:

\begin{center}

\begin{tikzpicture}
    % Define points
    \node[draw, circle, fill=black, label=below:$s$, minimum size=4pt, inner sep=0pt] (s) at (-2,0) {};
    \node[draw, circle, fill=black, label=below:$t$, minimum size=4pt, inner sep=0pt] (t) at (9,0) {};

    % Draw path between s and t
    \draw[thick] (s) -- (t);

    % Define points where the circles touch the st path
    \node[draw, circle, fill=black, label=below:\scalebox{0.8}{$q$}, minimum size=4pt, inner sep=0pt] (x) at (0.5,0) {};
    \node[draw, circle, fill=black, label=left:\scalebox{0.8}{$pivot_{i+1}(q)$}, minimum size=4pt, inner sep=0pt] (pix) at (0.8,1) {};
    \draw[thick] (x) -- (pix);

    % Add u_i and ball connections
    \node[draw, circle, fill=black, label=above:\scalebox{0.8}{$\vii$}, minimum size=4pt, inner sep=0pt] (ui) at (6, 0.5) {};
	    \node[draw, circle, fill=black, label=below:\scalebox{0.8}{$\bii$}, minimum size=4pt, inner sep=0pt] (ai) at (5.5,0) {};
    \draw[thick] (ui) -- (ai);

	    \draw[ thick] (ui) to[bend left=-10] node[midway, above, sloped] {\scalebox{0.7}{$est(pivot_{i+1}(q),\vii$)}} (pix);
	    \draw[line width=1mm,pink,opacity=0.6] (ui) to[bend left=-10]  (pix);

	\draw[line width=1mm,pink,opacity=0.6] (t)--(ai);
	\draw[line width=1mm,pink,opacity=0.6] (ui)--(ai);

\end{tikzpicture}

\end{center}

\begin{flalign}
    est(pivot_{i+1}(q),t)&\le |pivot_{i+1}(q),\vii] \odot (\vii,\bii) \odot \bii \ra t | && \notag \\
    &\leq est(pivot_{i+1}(q),\vii) + |(\vii,\bii)| + |\bii t|&& \notag \\
    \intertext{After Part 1 of \Cref{alg:genk}, \Cref{lem:basepart2} holds. Thus, $est(\vii, pivot_{i+1}(q)) \le |q~pivot_{i+1}(q)| + |q\bii| + 18(\lgn-(i+1))+1$. Thus, we get:}
    &\leq (|q~pivot_{i+1}(q)| + |q\bii| + 18(\lgn-(i+1))+1)+1 + |\bii t|  && \notag \\
    &= |q~pivot_{i+1}(q)| + |qt| + 18(\lgn-(i+1))+2 && \notag
\end{flalign}

\end{proof}

\subsection{Part 3: Estimate the distance between $s$ and $t$}

In Part 3 of \Cref{alg:genk}, we estimate the distance between $s $ and $t$ using vertices in $\ball_{i+1}(s)$. We now analyze the estimates produced by our algorithm. Assume that $|s\aii| \le |t\bii|$; otherwise, we apply a symmetrical argument from the $t$ side.
 There are two cases:
 \begin{enumerate}
 	\item $|s\aii| \le 3$
 	
 	We can apply \Cref{lem:genpart2} with $q=s$. Thus, we get 
 	\begin{align}
 		est(pivot_{i+1}(s),t) &\le |s~pivot_{i+1}(s)|+|s t| + 18(\lgn- (i+1))+2 \label{eq:11}
 	\end{align}
 	
 	In Part 3 of our algorithm, we process the pair $(s,t)$. But, we  have estimated the distance between $s$ and $t$ using $x = s$, $y=t$ and $w = s$ in  \Cref{alg:genk}. This estimation represents the following path in the graph:
 	$$ [s,pivot_{i+1}(s)]\odot [pivot_{i+1}(s), t]$$ 

			Thus, $est(s,t)$ will be updated as:
 	\begin{align*}
 		est(s,t) &\le est(s,pivot_{i+1}(s))+ est(pivot_{i+1}(s),t)
 		\intertext{Using \Cref{eq:11}, we get:}
 		& \le |s~pivot_{i+
 		1}(s)| + (|s~pivot_{i+1}(s)|+|s t| + 18(\lgn- (i+1))+2)\\
 		&=2|s~pivot_{i+1}(s)|+|s t| + 18(\lgn- (i+1))+2
 		\intertext{ Using triangle inequality,  $|s \uii| \le |s\aii| + |(\aii,\uii)| \le  4$. This implies that $|s~pivot_{i+1}(s)| \le 4$ as $\uii$ itself is one candidate for the $i+1$-th pivot of $s$. Thus, we have: }
 		&\le 8 +|s t| + 18(\lgn- (i+1))+2\\
 		&\le |s t| + 18(\lgn- (i+1))+10\\
 		&\le |s t| + 18(\lgn- i)
 	\end{align*}
 	
 	\item $|s\aii| \ge 4$
 	
 	    By \Cref{lem:near}, we have  
$|s~pivot_{i+1}(s)| - |s\aii| \le 3$. Let $p$ be the vertex at distance $|s~pivot_{i+1}(s)|$ from $s$ on the $st$ path. Since  
$
|s~pivot_{i+1}(s)| - |s\aii| \le 3,
$
the distance between $p$ and $\aii$ on the $st$ path is at most 3.  
Let $z$ be the vertex just before $p$ on the $st$ path (see the figure below). Since $z$ is strictly closer to $s$ than $pivot_{i+1}(s)$, it must lie in $\ball_{i+1}(s)$. We now apply \Cref{lem:genpart2} for $z$. Indeed, as in the previous section we can show that $z$ is closer to $s$ than $\aii$. Thus, we have:

		\begin{align}
			est(pivot_{i+1}(z),t) \le |z~pivot_{i+1}(z)|+|z t| + 18(\lgn- (i+1))+2\label{eq:12}
		\end{align}
		
\begin{center}

\begin{tikzpicture}
    % Define points
    \node[draw, circle, fill=black, label=below:$s$, minimum size=4pt, inner sep=0pt] (s) at (0,0) {};
    \node[draw, circle, fill=black, label=below:$t$, minimum size=4pt, inner sep=0pt] (t) at (10,0) {};
    \node[draw, circle, fill=black, label=above:\scalebox{0.8}{$\aii$}, minimum size=4pt, inner sep=0pt] (aii) at (3.5,0) {};
    \node[draw, circle, fill=black, label=below:\scalebox{0.8}{$\uii$}, minimum size=4pt, inner sep=0pt] (vii) at (3.9,-0.6) {};

    % Draw path between s and t
    \draw[thick] (s) -- (t);
    \draw[thick] (aii) -- (vii);

    % Draw dashed circles centered at s and t
    \draw[thick, dashed] (s) circle[radius=1.5];

    % Define points where the circles touch the st path
    \node[draw, circle, fill=red, label=above :\scalebox{0.8}{$p$}, minimum size=4pt, inner sep=0pt] (p) at (1.85,0) {};
	\node[draw, circle, fill=black, label=below left:$z$, minimum size=4pt, inner sep=0pt] (z) at (1.5,0) {};
	    \node at (-2.5, 0.9) (label) {\scalebox{0.8}{$\ball_{i+1}(s)$}};
    \draw[->] (label.east) -- (-1.4, 0.9);

    % Curly braces showing distances
    \draw [decorate, decoration={brace, amplitude=11pt, mirror}, thick] (p) -- (aii)
        node[midway, below=10pt] {\scalebox{0.8}{$\le 3$}};

\end{tikzpicture}
\end{center}
	
	In Part 3 of our algorithm, we process the pair $(s,t)$. Since $z \in \ball_{i+1}(s)$, we would have estimated the distance between $s$ and $t$ using $x = s$, $y=t$ and $w = z$ in  \Cref{alg:genk}. This estimation represents the following path in the graph:

		$$[s,z]\ \odot [z,pivot_{i+1}(z)]\odot [pivot_{i+1}(z)], t]$$ 

			Thus, $est(s,t)$ will be updated as:

		\begin{align}
			est(s,t) &\le est(s,z) +est(z,pivot_{i+1}(z))+est(pivot_{i+1}(z), t)  \notag\\
			\intertext{ Using \Cref{lem:pivotandball}, we know the exact distance between a vertex and its pivots, as well as between the vertex and all other vertices in its $(i+1)$-th ball. Thus, we get:  
}
			&= |s z|+ |z~pivot_{i+1}(z)|+ est(pivot_{i+1}(z),t) \notag\\
			\intertext{Using  \Cref{eq:12}, we get:}
			&= |s z|+ |z~pivot_{i+1}(z)|+ (|z~pivot_{i+1}(z)| + |z t| + 18(\lgn- (i+1))+2) \notag\\
			&= |s t|+ 2|z~pivot_{i+1}(z)|   + 18(\lgn- (i+1))+2 \notag\\
			\intertext{But,  $z$ is at a distance $\le 4$ from $\aii$ (see the figure above). Using triangle inequality, $|z\uii| \le |z\aii| + |(\aii,\uii)| \le 5$. This implies that $|z~pivot_{i+1}(z)| \le 5$ as $\uii$ itself is one candidate for the $i+1$-th pivot of $z$. Thus, we have: }
			&\le  |s t|+10 + 18(\lgn- (i+1))+2 \notag\\
			&\le  |s t| + 18(\lgn- (i+1))+12 \notag\\
			&\le  |s t| + 18(\lgn- i)\notag
		\end{align}

 \end{enumerate}
 Thus, we get $est(s,t) \le |st| + 18(\lgn-i)$ in both cases. Putting $i = \llnk-1$ in this equation, we get $est(s,t) \le |st| + 18 (\log k+1)$. If $|st| \ge 18 (\log k+1)$, then we get a 2-approximation of $st$ path. 

We now bound the running time of our algorithm.  Parts 1 and 2, discussed in the previous section, run in $\TL(n^2)$ time. Now, we analyze Part 3. By \Cref{lem:ballsize}, the number of vertices in $\ball_{i+1}(\cdot)$ is $\TL(\ti{i+1})$. Thus, the running time of Part 3 is  
\begin{align*}
\sum_{x,y \in A_0 \times A_0} \TL(\ti{i+1}) &= \TL(n^2 \ti{i+1}).
\intertext{The rest of the algorithm runs in $\TL(n^2)$ time. Thus, the total running time of our algorithm is dominated by the running time of Part 3. Putting $i = \lgn-\log k-1$, we get:}
& = \TL(n^2 \ti{\llnk})\\
					  & = \TL(n^2 2^{\frac{\log n}{k}})\\
					  &= \TL(n^2 \times n^{1/k})\\
					  & = \TL(n^{2+\frac{1}{k}})
\end{align*}

This completes the proof of our main  \Cref{thm:main}.

\bibliographystyle{alpha}
\bibliography{../../jabref.bib}
\appendix
\section{Proof of \Cref{lem:prob}}
We will prove this using induction on $i$. For the base case, when $i=1$, the lemma follows from the definition of $A_{1}$. Let us now assume using induction hypothesis, that $Pr[v \in A_{i-1}] = \frac{1}{\ti{i-1}}$, for $i \ge 1$.

\begin{equation*}
\begin{split}
	Pr[v \in A_i] &= Pr[\text{$v$ is chosen in $A_i | $ $v \in A_{i-1}]~Pr[v \in A_{i-1}]$ }\\
	&= \frac{1}{\ti{i-1}}  \times \frac{1}{\ti{i-1}}\\
	&=\frac{1}{\ti{i}}
\end{split}
\end{equation*}

\section{Proof of \Cref{lem:ballsize}}

Let us assume for contradiction that $\ball_i(s)$ contains $\ge \TL(\ti{i})$ vertices. 
Order these vertices by increasing distance from $s$, breaking ties arbitrarily. Define $S$ as the first $c \ti{i} \log n$ vertices in this order. Each vertex $v \in S$ belongs to $A_i$ with probability $\frac{1}{\ti{i}}$. Thus, the probability that no vertex in $S$ is in $A_i$ is at most:

\[
\left(1 - \frac{1}{\ti{i}}\right)^{c \ti{i} \log n} \le e^{-c \log n} = \frac{1}{n^c}.
\]

Thus, with a high probability, a vertex in $\ball_i(s)$ is in $A_i$. But this is a contradiction as by definition $\ball_i(s)$ does not contain any vertex from $A_i$. Thus our assumption is wrong, or with a high probability, the size of $\ball_i(s)$ is $\le \TL(\ti{i})$.
By a union bound over all vertices $s \in G$ and all $i$, with high probability, $\ball_i(s)$ is of size $\le \TL(\ti{i})$.

\section{Proof of \Cref{lem:pivotandball}}
For each $w \in A_i$, we construct a graph $H_w$ containing:
\begin{enumerate}
    \item all neighbors of $w$,
    \item all edges with degree at most $\TL(\ti{i})$.
\end{enumerate}

The first point ensures that the last edge on the $s\pis$ path, if of high degree, is included. The second point accounts for all other edges on the $s\pis$ path.  The reader can verify that the size of $H_w$ is $\TL(n \ti{i})$. We perform a BFS from $w$ in $H_w$, storing the shortest path to all vertices in $G$. The total runtime is
\[
\sum_{w \in A_i} \TL(n \ti{i}) = \TL\left(\frac{n}{\ti{i}} \times n \ti{i}\right) = \TL(n^2).
\]
Finally, each vertex $s$ selects the vertex in $A_i$ reporting the shortest path as $\pis$, breaking ties arbitrarily.

We now show how to compute $\ball_i(s)$. We perform a BFS from $s$, processing only vertices at distance $\leq |s\pis| - 2$. By processing, we mean that we examine all the neighbors of these vertices in the BFS algorithm. Since $\ball_i(s)$ includes all vertices within distance $|s\pis| - 1$ of $s$, this process correctly discover all vertices in $\ball_i(s)$.  By \Cref{lem:ballsize}, size of $\ball_i(s)$ is $\TL(\ti{i})$, and by \Cref{lem:degreeinball}, each processed vertex has degree $\TL(\ti{i})$. Thus, the BFS runs in

\[
\TL(\ti{i} \times \ti{i}) = \TL(\ti{i+1}).
\]
Summing over all vertices $s \in G$ and all $i$, the total runtime is
\[
\sum_{s \in G} \sum_{i=0}^{\log\log n-1} \TL(\ti{i+1}) = \sum_{s \in G} \TL(n) = \TL(n^2).
\]

\end{document}